\newtheorem{proposition}{Proposition}
\newtheorem{theorem}{Theorem}
\newtheorem{definition}{Definition}
\newtheorem{remark}{Remark}
\newtheorem{example}{Example}
\let\leq=\leqslant
\let\geq=\geqslant
\newcommand{\bydef}{\stackrel{\rm{def}}{=}}
\begin{document}
\RRNo{6653}
\RRdate{May 2008}
\RRversion{2\thanks{A study of allocation games has been included.}}
\RRdater{September 2009} 

\RRauthor{Coucheney, Pierre and Touati, Corinne and Gaujal, Bruno}

\authorhead{P. Coucheney, C. Touati, B. Gaujal}

\RRtitle{Un algorithme distribu\'e pour une association
  utilisateur-r\'eseau efficace et \'equitable dans les r\'eseaux sans fils
  multi-technologiques }

\RRetitle{A Distributed Algorithm for Fair and Efficient User-Network
  Association in Multi-Technology Wireless Networks}

\titlehead{Mobile Centric Network Association Algorithm}

\RRresume{Les \'equipements mobiles r\'ecents (tels que d\'efinis dans la
  norme IEEE 802.21) permettent aux usagers de basculer d'une
  technologie \`a l'autre (ce que l'on nomme ``handover
  vertical''). Plus de souplesse est autoris\'ee dans l'allocation des
  ressources et, par cons\'equent, cela augmente potentiellement les
  d\'ebits allou\'es aux usagers.

  Dans cet article, nous concevons un algorithme distribu\'e qui proc\`ede
  par t\^atonnement pour obtenir une association utilisateur-r\'eseau
  efficace et \'equitable, afin d'exploiter les b\'en\'efices du ``handover
  vertical''. D'une part, les mobiles mettent \`a jour pas \`a pas la
  proportion de paquets de donn\'ees qu'ils envoient sur chaque r\'eseau \`a
  partir d'une valeur transmise par la station de base. D'autre part,
  les stations de base calculent et envoient cette valeur aux
  mobiles. Cette valeur, appel\'ee ``repercussion utility '' repr\'esente
  l'impact que chaque mobile a sur le d\'ebit global du r\'eseau.

  Cette fonction d'utilit\'e est \`a rapprocher de l'id\'ee du co\^ut marginal
  dans la litt\'erature sur la tarification. Aussi bien l'algorithme de
  la station de base que celui du mobile sont suffisamment simples
  pour \^etre impl\'ement\'es dans les \'equipements standards actuels.

  \`A partir de m\'ethodes des jeux \'evolutionnaires, des jeux de
  potentiel, de la dynamique de r\'eplication, et des approximations
  stochastiques, nous montrons de mani\`ere analytique la convergence de
  l'algorithme vers une solution efficace et \'equitable en terme de
  d\'ebit. De plus, nous montrons qu'une fois l'\'equilibre atteint,
  chaque utilisateur est connect\'e \`a un unique r\'eseau ce qui permet de
  supprimer le co\^ut du ``handover vertical''.

  Plusieurs heuristiques reposant sur cet algorithme sont propos\'ees
  afin d'obtenir une convergence rapide. En effet, pour des raisons
  d'ordre pratique, le nombre d'it\'erations doit demeurer de l'ordre de
  quelques dizaines. Nous comparons alors la qualit\'e des solutions
  fournies dans divers sc\'enarios.  }

\RRabstract{ Recent mobile equipment (as well as the norm IEEE 802.21)
  now offers the possibility for users to switch from one technology
  to another (vertical handover). This allows flexibility in resource
  assignments and, consequently, increases the potential throughput
  allocated to each user.

  In this paper, we design a fully distributed algorithm based on
  trial and error mechanisms that exploits the benefits of vertical
  handover to find  fair and efficient assignment schemes.  On the
  one hand, mobiles gradually update the fraction of data packets they
  send to each network based on a value called \emph{repercussion
    utility} they receive from the stations. On the other hand,
  network stations compute and send repercussion utilities to each mobile that
  represent the impact each mobile has on the cell throughput.

  This repercussion utility function is closely related to the concept
  of marginal cost in the pricing literature. Both the station and the
  mobile algorithms are simple enough to be implemented in current
  standard equipment.

  Based on tools from evolutionary games, potential games, replicator
  dynamics and stochastic approximations, we analytically show the
  convergence of the algorithm to solutions that are efficient and
  fair in terms of throughput. Moreover, we show that after
  convergence, each user is connected to a single network cell which
  avoids costly repeated vertical handovers.

  Several simple heuristics based on this algorithm are proposed to
  achieve fast convergence.  Indeed, for implementation purposes, the
  number of iterations should remain in the order of a few tens. We
  finally provide extensive simulation of the algorithm in several
  scenarios.

 }

 \RRkeyword{ Distributed Algorithms, Hybrid Wireless Networks,
   Evolutionary Games, Potential Games, Replicator Dynamics, Vertical
   Handover, Fairness, Stochastic Approximation.}

 \RRmotcle{Algorithmes distribu\'es, r\'eseaux sans-fils h\'et\'erog\`enes,
   interconnection de r\'eseau, th\'eorie des jeux \'evolutionnaires, jeux
   de potentiels, dynamique de r\'eplication, handover vertical,
   \'equit\'e, approximation stochastique.}

\RRprojet{MESCAL}
\URRhoneAlpes
\RRtheme{\THNum}
\makeRR

\section{Introduction}

The overall wireless market is expected to be served by six or more
major technologies (GSM, UMTS, HSDPA, WiFi, WiMAX, LTE). Each
technology has its own advantages and disadvantages and none of them
is expected to eliminate the rest.  Moreover, radio access equipment
is becoming more and more multi-standard, offering the possibility of
connecting through two or more technologies concurrently, using the
norm IEEE 802.21.  Switching between networks using different
technology is referred to as \emph{vertical handover}. This is
currently done in UMA, for instance, which gives an absolute priority
to WiFi over UMTS whenever a WiFi connection is available.  In this
paper, in contrast, we address the problem of computing an efficient
association by providing a distributed algorithm that can be fair to
all users or efficient in terms of overall throughput.
Here are  the theoretical contributions of the paper.\\
- First, we propose a distributed algorithm with guaranteed
convergence to a non-cooperative equilibrium. This algorithm is based
on an iterative mechanism: at each time epoch the mobile nodes adapt
the proportion of the traffic they send on each network, based on some
values (caled repercussion utilities in the following)  they receive from the network. This work is in line with some
recent work on learning of Nash equilibria (see, for instance,
\cite{barth08}
\cite{sastry94}).\\
- Second, based on tools from potential games, we show that, by
appropriately setting up the repercussion utilities,  the resulting
equilibria can be made efficient or fair.\\
- Last, we show that the obtained equilibrium is always pure: after
convergence, each user is associated to a single technology.

To validate our results, we propose several practical implementations
of the algorithm and assess their performance in the practical setting
of a geographical area covered by a global WiMAX network overlapping
with several local IEEE 802.11 (also called WiFi) cells. We suppose
that each user can multi-home, that is to say split her traffic
between her local WiFi network and the global WiMAX cell, in order to
maximize her repercussion utility (to be defined later).

The integration of WiFi and UMTS or WiFi and WiMAX technologies has
already received some attention in the past.

There is a family of papers looking for solutions using Markov or
Semi-Markov Decision Processes \cite{kumar06, coupechoux08}. Based on
Markovian assumptions upon the incoming traffic, these works provide
with numerical solutions, so as to optimize some average or discounted
reward over time. Yet, because of the complexity of the system at hand
(the equations of the throughput in actual wireless systems are not
linear, and not even convex), important simplifying assumptions need
to be made, and the size of the state space quickly becomes
prohibitive to study real systems. Moreover, these methods require to
precisely know the characteristics of the system (e.g. in terms of
bandwidth achieved in all configurations, interference impact of one
cell over the neighboring ones, rate of arrivals), data that are
hardly available in practice.

Our approach is rather orthogonal as we seek algorithms that converge
towards an efficient allocation, using real-time measurements rather
than off-line data.  Such an approach follows game theory
frameworks. There has been recent work that, based on evolutionary
games~\cite{shakkottai07}, provide with optimal
equilibria. Evolutionary games~\cite{weibull97, hofbauer03}, or the
closely-related population games, are based on Darwinian-like
dynamics.  The evolutionary game literature is now mature and includes
several so-called population dynamics, which model the evolution of
the number of individuals of each population as time goes by. In our
context, a population can be seen as a set of individuals adopting the
same strategy (that is to say choosing the same network cell in the
system and adopting identical network parameters). Recent
work~\cite{shakkottai07} have shown that, considering the so-called
\emph{replicator dynamics}, an appropriate choice of the fitness
function (that determines how well a population is adapted to its
environment) leads to efficient equilibria. However they do not provide with algorithms that follow
the replicator dynamics (and hence converge to the
equilibria). Additionally they do not justify the use of evolutionary
games. Indeed, such games assume a large number of individuals, each
of them having a negligible impact on the environment and the fitness
of others. This assumption is not satisfied here, where the number of
active users in a given cell is on the order of a few tens. The
arrival or departure of a single one of them hence significantly
impacts the throughput allocated to others. As the number of players
is limited, we are hence dealing with another kind of equilibria,
namely the Nash Equilibria.

The third trend of this article concerns Nash equilibria learning
mechanisms. In the context of load balancing, a few algorithms (see,
for instance~\cite{barth08, sastry94}) have been shown to converge to
Nash Equilibria. Interestingly enough, it has been pointed out that
this class of algorithms has similar behavior and convergence
properties as replicator dynamics in evolutionary game theory.  It is
to be noted that the main weakness of these algorithms is that they
may converge to \emph{mixed} strategy Nash equilibria, that is to say
to equilibria where each user randomly picks up a decision at each
time epoch. Such equilibria are unfortunately not interesting in our
case, as they amount to perpetual handover between networks.

Finally, there is a growing interest in measuring or analyzing the
efficiency of Nash Equilibria. The most famous concept is certainly
the ``price of anarchy''~\cite{papa}. Let us also mention the more
recent SDF (Selfish Degradation Factor)~\cite{info06}. We will show in
the following that the Nash Equilibria our algorithm converges to are
locally optimal with respect to these two criteria.  In
addition, it has interesting fairness properties. Indeed, we show how
our algorithm can be tuned so as to converge to $\alpha$-fair points
(defined in cooperative game theory, see~\cite{mo}), for arbitrary
value of the parameter $\alpha$. This wide family of fairness criteria
includes in particular the well-known max-min fairness and
proportional fairness and can be generalized so as to cover the Nash
Bargaining Solution point~\cite{equite}.

In the present paper, we hence propose to make use of the previous
works in evolutionary games on heterogeneous network, with additional
fairness considerations, while proposing methods based on works on
Nash learning algorithms that can be implemented on future mobile
equipments. In addition, our work present a novel result which is that
our algorithm converges to \emph{pure} (as opposed to mixed)
equilibria, preventing undesired repeated handovers between stations.

\section{Framework and Model}

In this section, we present the model and the objective of this work
while introducing the notations used throughout the paper.

\subsection{Interconnection of Heterogeneous wireless networks}

We consider a set $\mathcal{N}$ of mobiles, such as mobile $n$ can
connect to a set of network cells, that can be of various technologies
(WiFi, WiMAX, UMTS, LTE...). The set of cells that users\footnote{In the
  following we use the term \emph{users} and \emph{mobiles}
  interchangeably.} can connect to, depends on their geographical
location, wireless equipment and operator subscription.

\subsection{User throughput and cell  load}

By \emph{throughput}, we refer to the rate of useful information
available for a user, in a given network, sometimes also called
\emph{goodput} in the literature.

The throughput obtained by an individual on a network depend on both
her own parameters and the ones of others. These parameters include
geographical position (interference and attenuation level) as well as
wireless card settings (coding schemes, TCP version, to cite a
few). In previous papers~\cite{kumar06, coupechoux08}, the authors
discretize the cells of networks into \emph{zones} of identical
throughput (see Fig~\ref{fig:system}). This means that users in the
same zone will receive the same throughput. Here, we can consider that
each user is in its  own zone\footnote{unlike in the cited papers,
  we are not constrained by the size of the system that is increasing
  with the number of zones.}. The set of users connected to a network
is called the \emph{load}
of the network.

\begin{figure}[htb]
\begin{center}
\includegraphics[scale=.5]{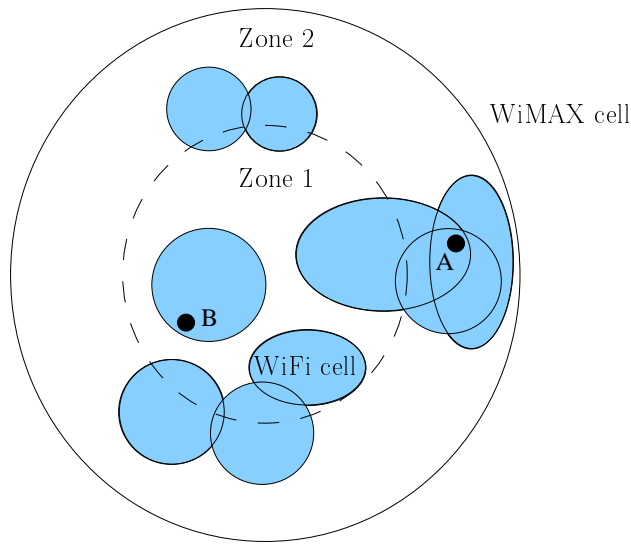}
\end{center}
\caption{An heterogeneous wireless system consisting of a single MAN
  (Metropolitan Area Network, e.g. WiMAX) cell and a set of partly
  overlapping LAN (Local Area Networks, e.g. WiFi) hot-spots (in
  grey). As user $B$ (in zone 1) is closer to the WiMAX antenna, it
  can use a more efficient coding scheme than $A$ (in zone 2) (for
  instance QAM 16 instead of QPSK). Zones are represented with a dash
  line, as opposed to cells, with full lines.}
\label{fig:system}
\end{figure}

More formally, we suppose that each user has a set of network cells
she can connect to denoted by $\mathcal{I}_n$. An action $s_n$ for
user $n$ is the choice of a cell $i \in \mathcal{I}_n$. Then, we
denote by $s$ the vector of users actions $s=(s_n)_{n \in
  \mathcal{N}}$, and call it an \emph{allocation} of mobiles to
networks.

Then, for each allocation $s$, the \emph{load} of network $i$ is
denoted by $\ell^i(s) \in \{0;1 \}^N $, and is such that
$\ell^i_n(s)=1$ if user $n$ takes action $i$, $0$ otherwise. The
throughput $u_n(\ell^i(s))$ of user $n$ taking action $i$ is a
function depending only of the vector of load of cell $i$.  With these
notations, the throughput received by $n$ when she takes decision
$s_n$ is $u_n(\ell^{s_n}(s))$.

\subsection{Pure versus mixed strategies}

As opposed to multi-homing between WiFi systems (see
\cite{shakkottai05, shakkottai07}), multi-homing between different
technologies (e.g. WiFi and WiMAX) induces several complications: the
different technologies may have different delays, have different
packet sizes or coding systems,... and re-constructing the messages
sent by the mobiles may be hazardous. Hence, while each user can
freely switch between the networks cells she has access to, we aim at
algorithms that converge - after a transitional state - to equilibria
in which each user uses a single network (so as to avoid the
cumbersome handover procedure). These are called \emph{pure strategy
  equilibria} (see Section~\ref{sec:potential}).

Yet, during the convergence phase, each mobile is using \emph{mixed
  strategies}\footnote{The formal definition is given in
  Section~\ref{sec:potential}. 
}. Then, the experienced throughput needs to be considered in terms of
expectations. In this case, $q_n$ is a vector of probabilities where
$q_{n,i}$ is the probability for mobile $n$ chooses cell $i \in
\mathcal{I}_n$. The global strategy set is the matrix $q = (q_n)_{n
  \in \mathcal{N}}$, while the choice $S_n(q)$ is now a random variable such that
$\mathbb{P}(S_n(q)=i)=q_{n,i}$. It follows that the expected
throughput received by user $n$ is $\mathbb{E}[u_n(\ell^{S_n(q)}(S(q)))]$,
where $S(q)=(S_n(q))_{n\in \mathcal{N}}$.

\subsection{Efficiency and Fairness}

In our approach, we consider elastic or data traffic. Then, the
Quality-of-Service (QoS) experienced by each mobile user is its
experienced throughput. We are hence interested in seeking equilibria
that are optimal (in the sense of Pareto) in terms of throughput.
Such equilibria is a strategy $q$ such that one cannot find another
strategy $q'$ that increases the expected throughput of a user without
decreasing that of another one: $\forall q' \neq q, \exists n \in
\mathcal{N}$ s.t. $ \mathbb{E}[u_n(\ell^{S_n(q')}(S(q')))] >
\mathbb{E}[u_n(\ell^{S_n(q)}(S(q)))] \Rightarrow \exists m \in
\mathcal{N}, \mathbb{E}[u_m(\ell^{S_m(q')}(S(q')))] <
\mathbb{E}[u_m(\ell^{S_m(q)}(S(q)))].  $

We design a fully distributed algorithm that converges to points which
are not only Pareto optimal but also {\it $\alpha$- fair}. The class
of $\alpha$-fair points~\cite{mo}, 
achieves
\begin{equation}
  \max_{q} \sum_{n \in \mathcal{N}}
  \mathbb{E}[G_\alpha(u_n(\ell^{S_n(q)}(S(q))))]
  \mbox{ with }
  G_\alpha(x) \bydef \frac
  {x^{1-\alpha}}{1-\alpha}.
\label{eq:alpha}
\end{equation}

In the case of pure strategies, for each mobile $n$ such that $S_n=i$,
$\mathbb{E}[u_n(\ell^{i}(S))]=u_n(\ell^{i}(s)) $. So, we aim at
building an algorithm that converges to an allocation $s^*$ that reaches
$$ 
\max_{s} \sum_{n \in \mathcal{N}} G_\alpha(u_n(\ell^{i}(s))).
$$

When $\alpha=0$, the corresponding solution is a social optimum. When
$\alpha$ tends to one, the solution is a proportional fair point (or
Nash Bargaining Solution) and when $\alpha$ tends to infinity, it
converges to a max-min fair point. The parameter $\alpha$ hence allows
flexibility in choosing between fully efficient versus fair
allocation, while ensuring Pareto optimality.

Finally, it is well-known that selfish behavior in the use of
resources (networks) may lead to inefficient use, in case of
congestion for example. To circumvent this, we introduce some
repercussion utility functions that are notified to users. Thus,
instead of competing for throughput, we consider an algorithm
reflecting a non-cooperative game between users that compete for%
maximizing their repercussion utility.  We will give an explicit
closed-form of the repercussion utility function in
Section~\ref{s-reper}. As in the throughput case, the repercussion
utility on a cell only depends on the load on that cell. We denote by
$r_n(\ell^{s_n}(s))$ the repercussion utility received by user $n$ (as
for the throughput, the repercussion utility received by that user
also depends on the choices of the other mobiles of the system, as
reflected in the allocation vector $s$). In the case of mixed
strategies, the expected repercussion utility is
$\mathbb{E}[r_n(\ell^{S_n}(S))]$. The study of such games is given in
the next section.

\section{Allocation Games Related to Potential Games}

This section is devoted to the formal study of allocation games. After
defining what is an allocation game in Section~\ref{s-alloc}, we
introduce the repercussion utilities in Section~\ref{s-reper} what
leads to a new game that is characterized in
Section~\ref{s-char}. Finally, we show the useful property that this
game is a potential game (Section~\ref{sec:potential}).

\subsection{Allocation Games}
\label{s-alloc}

We consider a normal-form game $(\mathcal{N},\mathcal{I},\mathcal{U})$
consisting of a set $\mathcal{N}$ of players ($|\mathcal{N}|=N$),
player $n$ taking actions in a set $\mathcal{I}_n\subset\mathcal{S}$
($|\mathcal{I}_n|=I_n$), where $\mathcal{S}$ is the set of all
actions. Let us denote by $s_n \in \mathcal{I}_n$ the action taken by
player $n$, and $s=(s_n)_{n\in \mathcal{N}} \in {\mathcal I} =
\bigotimes_{n=1}^N\mathcal{I}_n $. Then, $\mathcal{U}=(U_n)_{n \in \mathcal{N}}$ refers to the
\emph{utility} or \emph{payoff} for each player: the payoff for player
$n$ is $U_n(s_1,\dots, s_n, \dots, s_N)$.

By definition, an \emph{allocation game} is a game such that the
payoff of a player when she takes action $i$ only depends on the set
of players who also take action $i$. One can interpret such a game as
a set of users who share a common set of resources ${\mathcal S}$, and
an action vector corresponds to an allocation of resources to users
(hence the name of these games).

We define the {\it load} on action (or resource) $i$ by $\ell^i(s) \in
\{0;1\}^N$ as a vector such that $\ell^i_n(s)=1$ if player $n$ take
action $i$, $0$ otherwise. When there is no ambiguity, we will
simplify the notation and use $\ell=\ell^i(s)$. We denote by
$\ell^{s_n}(s)$ the load on the action taken by player $n$, and we
denote the payoff for player $n$ by $u_n(\ell^{s_n}(s)) \bydef
U_n(s_1,\dots, s_n, \dots, s_N)$.

Hence, allocations games are a wider class of games than
\emph{congestion games} where the payoff of each player depends on the
\emph{number} of players adopting the same strategy
\cite{rosenthal73}. They represent systems where different users
accessing a given resource may have a different impact.

\subsection{Repercussion utilities}
\label{s-reper}
We build a companion game of the allocation game, denoted $({\mathcal
  N},{\mathcal I},{\mathcal R})$. The new player utilities, called
\emph{repercussion utilities} are built from the payoffs of the
original game, according to the following definition:

\begin{definition}[allocation game with repercussion utilities]
  Let us consider the repercussion utility for player $n$ to be:
\begin{multline*}
  r_n(\ell^{s_n}(s)) \bydef u_n(\ell^{s_n}(s)) - \sum_{m\neq n
    :s_m=s_n} \left(u_m(\ell^{s_m}(s)-e_n)- u_m(\ell^{s_m}(s))
  \right),
\end{multline*}
where $e_n$ denotes the vector whose entries are all $0$ but the
n${}^{th}$ one, which equals 1.

An \emph{allocation game with repercussion utilities} is a game whose
payoffs are repercussion utilities.
\end{definition}

The utilities defined in this manner have a natural interpretation: it
corresponds to the player's payoff ($u_n(\ell^{s_n}(s))$) minus the
total increase in payoff for all users impacted by the presence of a
given user on a given commodity ($\displaystyle \sum_{\substack{m\neq
    n :\\s_m=s_n}} \left[u_m(\ell^{s_m}(s)-e_n)- u_m(\ell^{s_m}(s))
\right]$). This is more obvious in the following equivalent formulation.

\begin{remark}
  An equivalent formulation of the repercussion utilities is:
$$
r_n(\ell^{s_n}(s)) =\displaystyle
\sum_{m:\ell^{s_n}_m=1}u_m(\ell^{s_n}(s)) - \sum_{\substack{m\neq n:
    \ell^{s_n}_m(s)=1}}u_m(\ell^{s_n}(s)-e_n).
$$
\end{remark}

\subsection{Characterization of Allocation Games with Repercussion Utilities}
\label{s-char}

We now give a characterization of a payoff that is a repercussion
utility.

\begin{proposition}\label{propositionreward}
An allocation game $({\mathcal N},{\mathcal I},{\mathcal R})$ is an
allocation game with repercussion utilities if
  and only if $ \forall \ell, \forall n,m \in \mathcal{N} \textrm{
    s.t. } s_m=s_n$,
\begin{equation} 
 r_{n}(\ell)-r_{n}(\ell-e_m) =r_{m}(\ell)-r_{m}(\ell-e_n).
\label{eq:symetrie}
\end{equation}
\end{proposition}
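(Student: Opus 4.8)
The plan is to reduce both directions of the equivalence to a single scalar ``potential'' attached to each cell. For a load vector $\ell \in \{0,1\}^N$, define the aggregate payoff
$$
W(\ell) \bydef \sum_{m:\, \ell_m = 1} u_m(\ell),
$$
\ie the sum of the throughputs of all players present on the cell carrying load $\ell$. The first observation, which I would record as a lemma, is that the equivalent formulation of the Remark collapses to the clean identity $r_n(\ell) = W(\ell) - W(\ell - e_n)$ whenever $\ell_n = 1$: the first sum in the Remark is exactly $W(\ell)$, and since $\ell_n=1$ the index set $\{m \neq n : \ell_m = 1\}$ coincides with $\{m : (\ell - e_n)_m = 1\}$, so the second sum is $W(\ell - e_n)$. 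Thus a repercussion utility is nothing but the discrete derivative of $W$ with respect to removing player $n$.

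Necessity ($\Rightarrow$) is then a one-line computation. Assuming the $r_n$ arise from an allocation game, I would substitute the identity into each side of \eqref{eq:symetrie}. For $n,m$ with $\ell_n = \ell_m = 1$,
$$
r_n(\ell) - r_n(\ell - e_m) = W(\ell) - W(\ell - e_n) - W(\ell - e_m) + W(\ell - e_n - e_m),
$$
and the right-hand side is visibly symmetric under the exchange $n \leftrightarrow m$ (using $\ell - e_n - e_m = \ell - e_m - e_n$), which is precisely $r_m(\ell) - r_m(\ell - e_n)$.

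Sufficiency ($\Leftarrow$) is the substantive direction and where I expect the real work to lie. Given utilities $r_n$ satisfying \eqref{eq:symetrie}, I would reconstruct a potential $W$ by ``integrating'' the $r_n$: set $W(\mathbf 0) = 0$ and, for a load $\ell$ whose present players are $n_1, \dots, n_k$, define $W(\ell) = \sum_{j=1}^{k} r_{n_j}(e_{n_1} + \cdots + e_{n_j})$, summing the repercussion utilities along a path that inserts the players one at a time. The crux is that this value must not depend on the chosen insertion order. Since any two orderings are connected by a sequence of adjacent transpositions, it suffices to check invariance under swapping two consecutive insertions of players $n$ and $m$; writing $\ell$ for the load just after both have been added, this invariance is exactly $r_n(\ell) - r_n(\ell - e_m) = r_m(\ell) - r_m(\ell - e_n)$, \ie hypothesis \eqref{eq:symetrie}. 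Hence $W$ is well defined and satisfies $r_n(\ell) = W(\ell) - W(\ell - e_n)$ by construction.

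It remains to exhibit an actual allocation game, \ie individual payoffs $u_m$, reproducing these repercussion utilities. Because only the aggregate $W$ is pinned down, any split works; the simplest is the equal share $u_m(\ell) \bydef W(\ell)/|\ell|$ for each present player $m$, which depends only on the load and satisfies $\sum_{m : \ell_m = 1} u_m(\ell) = W(\ell)$. Feeding these $u_m$ back through the Remark's formula returns $W(\ell) - W(\ell - e_n) = r_n(\ell)$, so $(\mathcal N, \mathcal I, \mathcal R)$ is indeed an allocation game with repercussion utilities, completing the equivalence. The main obstacle is thus the path-independence argument in the sufficiency direction; everything else is bookkeeping around the potential $W$.
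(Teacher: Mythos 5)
Your proof is correct and follows essentially the same route as the paper's: your path-independence of the insertion sum is exactly the paper's Eq.~\ref{eq:permutations} (proved identically via adjacent transpositions from Eq.~\ref{eq:symetrie}), and your equal-share payoff $u_m(\ell) = W(\ell)/|\ell|$ is precisely the paper's construction $u_n(\ell) = \frac{1}{K}\sum_{k=0}^{K-1} r_{a(k+1)}\bigl(\ell-\sum_{j=1}^{k} e_{a(j)}\bigr)$, since that sum is your $W(\ell)$ read in removal order. The only difference is presentational: by naming the potential $W$ and recording $r_n(\ell)=W(\ell)-W(\ell-e_n)$ as a lemma, you make the necessity direction a one-line symmetry check, which the paper instead carries out by direct manipulation of the sums.
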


\begin{proof}
  Suppose that $r$ is a repercussion utility, then there exists a
  payoff $u$ such that:
$$ r_{n}(\ell) = \sum_{\ell_k=1}u_{k}(\ell) - \sum_{k\neq
  n:\ell_k=1}u_{k}(\ell-e_n).$$ Then, denote
$$ A = \left(\sum_{\ell_k=1}u_{k}(\ell-e_m) - \sum_{k\neq
    n:\ell_k=1}u_{k}(\ell-e_n-e_m)\right).$$ Then,
\begin{multline*}
$$
\begin{array}{ll}
  r_{n}(\ell)-r_{n}(\ell-e_m) &
  = \displaystyle \sum_{\ell_k=1}u_{k}(\ell) - \sum_{k\neq
    n:\ell_k=1}u_{k}(\ell-e_n) - A\\
  & =\displaystyle \sum_{\ell_k=1}u_{k}(\ell) - \sum_{k\neq
    m:\ell_k=1}u_{k}(\ell-e_m) - A\\
  & =r_{m}(\ell)-r_{m}(\ell-e_n). 
\end{array}
$$
\end{multline*}

Conversely, consider an allocation game $(\mathcal{N}, \mathcal{I},
\mathcal{R})$ such that Eq.~\ref{eq:symetrie} is satisfied. Consider
an action $i$ and $\ell$ the vector of load on action $i$. Let $K
\bydef \sum_{n \in \mathcal{N}} \ell_n$ is the number of players
taking action $i$. Further, let $(a(k)), 1 \leq k \leq K$ be the
subscripts of all players taking action $i$. If there are $K$ such
players, then $\ell = \sum_{k=1}^{K} e_{a(k)}$.  Then, we claim that,
for any permutation $\sigma$ of $\{1,\dots,K\}$:
\begin{equation}
  \sum_{k=0}^{K-1} r_{a(k+1)}(\ell-\sum_{j=1}^{k} e_{a(j)}) = 
  \sum_{k=0}^{K-1} r_{a(\sigma(k+1))}(\ell-\sum_{j=1}^{k}
  e_{a(\sigma(j))}).
\label{eq:permutations}
\end{equation}
Indeed, note that, from Eq.~\ref{eq:symetrie}:
\begin{multline*}
  r_{a(k+1)}\left(\ell-\sum_{j=1}^{k-1} e_{a(j)}\right)
  - r_{a(k+1)}\left(\ell-\sum_{j=1}^k e_{a(j)}\right) =\\
  r_{a(k)}\left(\ell-\sum_{j=1}^{k-1} e_{a(j)}\right) -
  r_{a(k)}\left(\ell-\sum_{j=1}^{k-1} e_{a(j)}-e_{a(k+1)}\right).
\end{multline*}
Therefore:
\begin{multline*}
  r_{a(k)}\left(\ell-\sum_{j=1}^{k-1} e_{a(j)}\right) +
  r_{a(k+1)}\left(\ell-\sum_{j=1}^k
  e_{a(j)}\right) = \\
  r_{a(k+1)}\left(\ell-\sum_{j=1}^{k-1} e_{a(j)}\right) +
  r_{a(k)}\left(\ell-\sum_{j=1}^{k-1} e_{a(j)}-e_{a(k+1)}\right).
\end{multline*}
Hence, for any $k$, the sum $\sum r_{a(k+1)}(\ell-\sum_{j=1}^k
e_{a(j)})$ remains unchanged if one swaps $a(k)$ and $a(k+1)$
(elementary transposition). Then, Eq.~\ref{eq:permutations}
results from the fact that any permutation $\sigma$ can be decomposed
in a finite number of elementary transpositions.

We now construct a payoff $u$ as follow: for any $n$ such that
$\ell_n=1$, let us define:
$$\displaystyle
u_n(\ell) \bydef \frac{1}{K} \sum_{k=0}^{K-1} r_{a(k+1)}(\ell-\sum_{j=1}^k
e_{a(j)}).
$$
Then, 
\begin{multline*}
  \displaystyle \sum_{\ell_m=1}u_{m}(\ell) - \sum_{m\neq
    n:\ell_m=1}(u_{m}(\ell-e_n)) =\\
  \sum_{k=0}^{K-1} r_{a(k+1)}(\ell-\sum_{j=1}^k e_{a(j)}) -
  \sum_{k=0}^{K-2} r_{b(k+1)}(\ell-e_n-\sum_{j=1}^k e_{b(j)}).
\end{multline*}

Note that the sequence $a$ is identical to sequence $b$ with the
additional element $n$. From~Eq.~\ref{eq:permutations}, we can choose
a permutation $\sigma$ such that $a(\sigma(1))=n$. Then:

$$
\begin{array}{l}
  \displaystyle \sum_{\ell_m=1}u_{m}(\ell) - \hspace{-0.8em} \sum_{m\neq
    n:\ell_m=1}\hspace{-0.8em}(u_{m}(\ell-e_n))\\
  = \displaystyle \sum_{k=0}^{K-1} r_{a(\sigma(k+1))}(\ell-\sum_{j=1}^k
  e_{a(\sigma(j))}) -\sum_{k=1}^{K-1} r_{a(\sigma((k+1))} (\ell-e_n-\sum_{j=2}^k
  e_{a(\sigma(j))}) \\
  = \displaystyle \sum_{k=1}^{K-1}
  r_{a(\sigma(k+1))}(\ell-e_{a(\sigma(1))}-\sum_{j=2}^k
  e_{a(\sigma(j))}) + \displaystyle r_{a(\sigma(1))}(\ell) -\\ 
  \hfill \displaystyle \sum_{k=1}^{K-1} r_{a(\sigma((k+1))}(\ell-e_n-\sum_{j=2}^k
  e_{a(\sigma(j))})\\
  = r_n(\ell).
\end{array}
$$

Hence $(\mathcal{N},\mathcal{I},\mathcal{R})$ is the allocation game
with repercussion utilities associated to the
$(\mathcal{N},\mathcal{I},\mathcal{U})$ allocation game.
\end{proof}

From Prop.~\ref{propositionreward}, we conclude that allocation games
with repercussion utilities are a special subset of allocation
games. The results presented in the following are hence valid for any
allocation game such that Eq.~\ref{eq:symetrie} is satisfied.

\begin{example}
  Let $M$ be the payoff matrix of a two-player game. This amounts to
  saying that the first (resp. second) player chooses the line and the
  second chooses the column. The payoff for the first player is given
  by the first (resp. second) component.
$$M = 
\left(
\begin{array}{cc}
(a,A)&(b,B)\\
(c,C)&(d,D)
\end{array}
\right).
$$
It follows from Proposition~\ref{propositionreward} that this is a
game with repercussion utilities if and only if $a=A+b-C$ and
$d=D+c-B$. Then, one can check the interesting property that there
necessarily exists a pure Nash equilibrium (for instance $(a,A)$ is a
Nash equilibrium if $a \geq c$ and $A \geq B$).
\end{example}

\subsection{Allocation Games with Repercussion Utilities are Potential Games}
\label{sec:potential}

In this section, we show that, given an allocation game, the game with
repercussion utilities (1) admits a potential function and (2) this
potential equals the sum of the payoffs for all players in the initial
game.
This appealing property is exploited in the next section to show some
strong results on the behavior of the well-known replicator dynamics
on such games.

Consider an allocation $({\mathcal N},{\mathcal I},{\mathcal U})$ and
its companion game $({\mathcal N},{\mathcal I},{\mathcal R})$. We
first assume that players have \emph{mixed} strategies. Hence a
strategy for player $n$ is a vector of probability
$q_n=(q_{n,i})_{i\in \mathcal{I}_n }$, where $q_{n,i}$ is the
probability for player $n$ to take action $i$ (i.e.  $q_{n,i}\geq 0$
and $ \sum_{i\in \mathcal{I}_n}q_{n,i}=1$). The strategy domain for
player $n$ is $\Delta_n \bydef \{ 0 \leq q_{n,i} \leq 1, s.t.
\sum_{i\in \mathcal{I}_n}q_{n,i}=1\}$. Then, the global
domain\footnote{Notice that $\Delta$ is a polyhedron.} is $\Delta =
\bigotimes_{n=1}^N \Delta_n$ and a global strategy is $q \bydef (q_n)_{n
  \in \mathcal{N}}$. We say that $q$ is a {\it pure strategy} if for
any $n$ and $i$, $q_{n,i}$ equals either 0 or 1.

We denote by $S$ the random vector whose entries $S_n$
are all independent and whose distribution is $\forall n \in
\mathcal{N}, \forall i \in \mathcal{I}_n, \,\mathbb{P}(S_n=i)=q_{n,i}$.
The expected payoff for player $n$ when she takes action $i$ is
$f_{n,i}(q) \bydef \mathbb{E}[r_n(\ell^i(S))|S_n=i]$. Then, her mean
payoff is $ \overline{f}_n(q) \bydef \displaystyle \sum_{i \in
  \mathcal{I}_n}q_{n,i}f_{n,i}(q)$.  We can notice that $f_{n,i}(q)$
only depends on $(q_{m,i})_{m\neq n} $ and it is a multi-linear function
of $(q_{m,i})_{m\neq n}$.

The next theorem claims that the allocation game with repercussion
utilities is a potential game. Potential games were first introduced
in~\cite{monderer96}. The notion was afterward
extended to continuous set of players~\cite{sandholm01}. In our case, it
refers to the fact that the expected payoff for each player derives
from a potential function. More precisely, we show that
$\displaystyle f_{n,i}(q)=\frac{\partial F}{\partial q_{n,i}}(q)$,
where 
\begin{equation}
F(q) \bydef \sum_{n \in \mathcal{N}}\sum_{i \in
  \mathcal{I}_n}q_{n,i}\mathbb{E}[u_n(\ell^i(S))|S_n=i].
\label{eq:potentiel}
\end{equation}
It is interesting to notice the connection between $f_{n,i}(q)$ which
is the expected repercussion utility, and $F(q)$ which refers to the
sum of expected payoffs in the initial game. A strategy that increases
the expected repercussion utility of a player, yields to a marginal
increase of the potential.

\begin{theorem}
  The allocation game with repercussion utilities is a potential game,
  and its associated potential function is $F$, as defined in
  Eq.~\ref{eq:potentiel}.
\end{theorem}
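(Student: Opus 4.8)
The plan is to prove the theorem by verifying the potential-game identity directly: for every player $n$ and action $i$, the expected repercussion utility $f_{n,i}(q)$ must equal $\partial F/\partial q_{n,i}(q)$ up to a term that does not depend on the action $i$ (which is all the potential property on the simplex requires, since the feasible directions in $\Delta_n$ are the differences $e_i-e_j$). The strategy is first to put $F$ in a transparent form, then to differentiate it cleanly, and finally to match the derivative to $f_{n,i}$ through a pointwise welfare-accounting identity built from the Remark.

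First I would simplify $F$. Because $q_{n,i}=\mathbb{P}(S_n=i)$ and the coordinates of $S$ are independent, each summand satisfies $q_{n,i}\,\mathbb{E}[u_n(\ell^i(S))|S_n=i]=\mathbb{E}[u_n(\ell^{S_n}(S))\mathbf{1}_{S_n=i}]$, so summing over $i$ and $n$ collapses the definition of $F$ into the expected social welfare, $F(q)=\mathbb{E}\big[\sum_k u_k(\ell^{S_k}(S))\big]$. Expanded over the product law, $F(q)=\sum_s\big(\prod_m q_{m,s_m}\big)\sum_k u_k(\ell^{s_k}(s))$ is multilinear in the entries of $q$, with each monomial carrying exactly one factor $q_{n,\cdot}$ per player. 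Differentiating in $q_{n,i}$ then merely selects the profiles with $s_n=i$ and removes the factor $q_{n,i}$, yielding the clean expression $\partial F/\partial q_{n,i}(q)=\mathbb{E}\big[\sum_k u_k(\ell^{S_k}(S))\,\big|\,S_n=i\big]$, i.e. the conditional expected social welfare given that $n$ plays $i$.

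The heart of the proof is a pointwise identity relating this total welfare to the repercussion utility. I would use the equivalent formulation in the Remark, $r_n(\ell^i(s))=\sum_{k:\ell^i_k=1}u_k(\ell^i)-\sum_{k\neq n:\ell^i_k=1}u_k(\ell^i-e_n)$, together with the key observation that deleting player $n$ alters only the loads of the cells $n$ occupies, namely cell $i$. Splitting the welfare into players sharing cell $i$ with $n$ and players on the other cells, one checks that, conditionally on $S_n=i$,
$$\sum_k u_k(\ell^{S_k}(S)) = r_n(\ell^i(S)) + \sum_{k\neq n}u_k(\ell^{S_k}(S^{-n})),$$
where $S^{-n}$ denotes the profile with $n$ removed. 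Taking $\mathbb{E}[\cdot\,|\,S_n=i]$, the first term reproduces $f_{n,i}(q)$, while the residual is the welfare of the remaining players once $n$ is absent; crucially this depends only on $S_{-n}$ and never references the cell $i$, so by independence of the coordinates its conditional expectation equals its unconditional one, an $i$-independent quantity $C_n(q)$.

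Combining the two displays gives $\partial F/\partial q_{n,i}(q)=f_{n,i}(q)+C_n(q)$ with $C_n$ constant in $i$, hence $\partial F/\partial q_{n,i}-\partial F/\partial q_{n,j}=f_{n,i}-f_{n,j}$ for all $i,j\in\mathcal{I}_n$; along the admissible directions $e_i-e_j$ of $\Delta_n$ the gradient of $F$ coincides with the expected repercussion payoffs, which is exactly the assertion that $F$ is a potential for the game $(\mathcal N,\mathcal I,\mathcal R)$. I expect the main obstacle to be the bookkeeping in the pointwise identity: one must argue carefully that removing $n$ leaves the payoff of every player not sharing $n$'s cell untouched, so that the residual welfare genuinely decouples from $i$, and align the index sets $\{k:\ell^i_k=1\}$ and $\{k\neq n:\ell^i_k=1\}$ of the Remark with the ``shares cell $i$'' versus ``on another cell'' partition. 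Once that cancellation is in place the remainder is routine.
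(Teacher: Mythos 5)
Your proof is correct, but it takes a genuinely different route from the paper's. The paper differentiates $F$ term by term: writing each conditional expectation $\mathbb{E}[u_m(\ell^i(S))|S_m=i]$ as a polynomial in which ``player $k$ avoids cell $i$'' carries probability $1-q_{k,i}$, it expands $\partial F/\partial q_{n,i}$ through differences of conditional probabilities and recombines the result into the externality sum, arriving at the exact identity $\partial F/\partial q_{n,i}(q)=f_{n,i}(q)$. You instead identify $F$ with the expected social welfare and prove the sample-path identity $r_n(\ell^i(s))=\sum_k u_k(\ell^{s_k}(s))-\sum_{k\neq n}u_k(\ell^{s_k}(s^{-n}))$, i.e.\ the repercussion utility is exactly the player's marginal contribution to total welfare (a Vickrey--Clarke--Groves-type ``wonderful life'' utility); conditioning on $S_n=i$ then gives the potential property with essentially no probability calculus. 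The one discrepancy to be aware of: with your multilinear extension of $F$ off the simplex you obtain $\partial F/\partial q_{n,i}=f_{n,i}(q)+C_n(q_{-n})$, where $C_n$ (the expected welfare of the others in $n$'s absence) is generally nonzero, whereas the paper's implicit extension (the $1-q_{k,i}$ parametrization) makes this term vanish and yields the exact gradient identity as literally stated in its definition of potential game. The two formulations agree on every tangential direction $e_{n,i}-e_{n,j}$ of $\Delta$, and, as you correctly note, this difference-based form is the standard notion of potential and is all that the paper uses downstream: the Lyapunov property of the replicator dynamics and the equilibrium analysis only involve sums $\sum_i f_{n,i}\,\dot q_{n,i}$ with $\sum_i \dot q_{n,i}=0$, so the action-independent term $C_n$ never contributes. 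In short, your argument is sound and conceptually more transparent, exposing the marginal-contribution structure behind the result; the paper's computation buys the stronger-looking pointwise identity $f_{n,i}=\partial F/\partial q_{n,i}$ for its particular parametrization, at the cost of heavier conditional-probability bookkeeping.
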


\begin{proof}
  Let us first differentiate function $F$:
$$
\frac{\partial F}{\partial q_{n,i}}(q)=
\mathbb{E}[u_n(\ell^i(S))|S_n=i]+ \\ \sum_{m\neq
  n}q_{m,i}\frac{\partial \mathbb{E}[u_m(\ell^i(S))|S_m=i] }{\partial
  q_{n,i}}.
$$
In fact, it is clear that $\displaystyle \frac{\partial
  \mathbb{E}[u_m(\ell^j(S))|S_m=j] }{\partial q_{n,i}}=0$ if $j \neq
i$, and $\displaystyle \frac{\partial \mathbb{E}[u_n(\ell^i(S))|S_n=i]
}{\partial q_{n,i}}=0$.
To simplify the notations, we omit the index $i$. Then,
$$
\begin{array}{l}
  \displaystyle \frac{\partial F}{\partial q_{n}}(q)
  \displaystyle=\mathbb{E}[u_n(\ell(S))|S_n=i]+ \sum_{m\neq
    n}q_{m}\frac{\partial }{\partial q_{n}}
  \sum_{\ell}u_m(\ell)\mathbb{P}(\ell(S) = \ell|S_m=i)\\
  \displaystyle=\mathbb{E}[u_n(\ell(S))|S_n=i]+\\
  \hfill \displaystyle\sum_{m\neq
    n}q_{m}\sum_{\ell}u_m(\ell)\bigg(
  \mathbb{P}(\ell(S)=\ell|S_m=i,S_n=i)
  -\mathbb{P}(\ell(S)=\ell|S_m=i,S_n\neq i)\bigg)\\
  \displaystyle=\mathbb{E}[u_n(\ell(S))|S_n=i]+ \\
  \hfill \displaystyle\sum_{m\neq
    n}\sum_{\ell}u_m(\ell) × \bigg(
  \mathbb{P}(\ell(S)=\ell,S_m=i|S_n=i)
  -\mathbb{P}(\ell(S)=\ell+e_n,S_m=i|S_n= i)\bigg)\\
  \displaystyle=\mathbb{E}[u_n(\ell(S))|S_n=i] - \sum_{\substack{m\neq
      n: S_m=S_n}}\bigg(\mathbb{E}[u_m(\ell(S)-e_n)|S_n=i]-
  \mathbb{E}[u_m(\ell(S))|S_n=i]\bigg)\\
  =\displaystyle \mathbb{E}[r_n(\ell(S))|S_n=i]\\
  =\displaystyle f_{n,i}(q).
\end{array}
$$
\end{proof}

\begin{remark}
  By adding a  large constant to all payoff $u$, the repercussion
  utilities become positive. Clearly, this has no impact on the relative
  potential of allocations. The Nash Equilibria of the allocation game
  are also conserved. In the following, we will assume that the
  repercussion utilities are positive.
\end{remark}

\section{Replicator dynamics and algorithms}
\label{sectionapplication}

In this section, we show how to design a strategy update mechanism for
all players in an allocation game with repercussion utilities that
converges to pure Nash Equilibria. We will study in the next section
(Section \ref{s-attract}) their efficiency properties.

\subsection{Replicator Dynamics.}
\label{s-repdy}

We now consider that the player strategies vary over time, hence
$q$ depends on the time $t$: $q=q(t)$. The trajectories of the
strategies are described below by a dynamics called \emph{replicator
  dynamics}. We will see in section \ref{s-approx} that this
dynamics can be seen as the limit of a learning mechanism.

\begin{definition}
  The replicator dynamics ~\cite{weibull97}\cite{hofbauer03} is
  ($\forall n\in \mathcal{N}, i\in \mathcal{I}_n$):
\begin{equation}\label{dynamics}
\frac{dq_{n,i}}{dt}(q)=q_{n,i}\left( f_{n,i}(q)-\overline{f_n}(q)\right).
\end{equation}
We say that $\hat{q}$ is a \emph{stationary point} (or equilibrium point) 
if ($\forall n\in \mathcal{N}, i\in
\mathcal{I}_n$):
$$\frac{dq_{n,i}}{dt}(\hat{q})=0 .$$
\end{definition}
In particular, $\hat{q}$ is a stationary point implies $\forall n\in
\mathcal{N}, i\in \mathcal{I}_n ,\, \,\hat{q}_{n,i}=0$ or
$f_{n,i}(\hat{q})=\overline{f}_n(\hat{q})$.

Intuitively, this dynamics can be understood as an update mechanism
where the probability for each player to choose actions whose expected
payoffs are above average will increase in time, while non profitable
actions will gradually be abandoned.

Let us notice that the trajectories of the replicator dynamics remain
inside the domain $\Delta$. Also, from \cite{sandholm01}, the
potential function $F$ is a strict Lyapunov function for the
replicator dynamics, that means that the potential is strictly
increasing along the trajectories outside the stationary points.

In this context, a closed set $A$ is {\it Lyapunov stable} if, for
every neighborhood $B$, there exists a neighborhood $B'\subset B$ such
that the trajectories remain in $B$ for any initial condition in $B'$.
$A$ is {\it asymptotically stable} if it is Lyapunov stable and is an
attractor (i.e. there exists a neighborhood $C$ such that all
trajectories starting in $C$ converge to $A$). The existence of a
strict Lyapunov function yields the following:


\begin{remark}
The accumulation  points of the trajectories of the replicator dynamics are
stationary points.
\end{remark}

Intuitively, the limit points (that are connected) of the same
trajectory must have the same value for the Lyapunov function. But the
set of limit points is invariant for the dynamics, hence the Lyapunov function
is non-increasing on this set. The remark follows.

\begin{proposition}\label{pure}
  All the asymptotically stable sets of the replicator dynamics are
  faces of the domain. These  faces are sets of equilibrium points for
  the replicator dynamics.
\end{proposition}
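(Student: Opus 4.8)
The plan is to combine the strict Lyapunov property of $F$ with the fact that $F$ is \emph{multilinear}, i.e. affine in each player's block $q_n$ once the other blocks are fixed; this follows from $\frac{\partial F}{\partial q_{n,i}}(q)=f_{n,i}(q)$ together with the already-noted fact that $f_{n,i}$ does not depend on $q_n$, so that $\frac{\partial^2 F}{\partial q_{n,i}\partial q_{n,j}}=0$. Before anything else I would record that every face of $\Delta$ is invariant: since the dynamics has the multiplicative form $\frac{dq_{n,i}}{dt}=q_{n,i}(f_{n,i}(q)-\overline{f}_n(q))$, a null coordinate stays null and a positive coordinate stays positive, so each set $\{q:q_{n,i}=0\}$, hence each face of $\Delta$ (obtained by forcing a prescribed subset of the $q_{n,i}$ to vanish) and its relative interior, is invariant under the flow.

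Next I would show that asymptotic stability forces equilibria and maximality of $F$. Let $A$ be an asymptotically stable set. As an attractor that is also Lyapunov stable, $A$ is invariant; the flow is complete on the compact set $\Delta$, and $F$ increases strictly along every non-stationary trajectory while its accumulation points are stationary (the preceding remark). From these facts I would argue that $F$ is constant on each connected component of $A$ and that $A$ consists solely of stationary points at which $F$ attains a local maximum over $\Delta$ — otherwise a trajectory starting near $A$ could raise $F$ above the value it takes on $A$, contradicting that $A$ attracts its neighborhood.

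The heart of the argument, and the step I expect to be hardest, is to prove that such a set of local maximizers is exactly a face. Here multilinearity is decisive: along any edge of $\Delta$, which varies a single player's weight between two of her pure actions, $F$ is affine and therefore monotone, so $F$ has no strict interior maximum along an edge. Consequently the local-maximum level set of $F$ cannot be cut transversally inside any simplex $\Delta_n$; it is flat in every block direction and must coincide with the face $\Phi=\prod_n \Phi_n$, where $\Phi_n$ is the sub-simplex of $\Delta_n$ spanned by the actions $i$ realizing the per-player maximum of $f_{n,\cdot}$ at the points of $A$. I would then pin down $A=\Phi$ using invariance of relative interiors (to exclude that $A$ is a strict subset) and the strict Lyapunov property (to exclude that it is strictly larger). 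The delicate points are controlling the connectedness and the exact extent of $A$, and ruling out unions of several faces.

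Finally, that this face is made of equilibrium points is again read off from multilinearity: on $\Phi$ the affine map $q_n\mapsto F$ is constant on $\Phi_n$, so $f_{n,i}(q)=f_{n,j}(q)$ for all active actions $i,j$ and all $q\in\Phi$; hence $f_{n,i}(q)=\overline{f}_n(q)$ whenever $q_{n,i}>0$, and $q_{n,i}=0$ otherwise, which is precisely the stationarity condition $\frac{dq_{n,i}}{dt}=0$. Thus every point of $\Phi$ is an equilibrium of the replicator dynamics. In summary, the strict Lyapunov function identifies asymptotically stable sets with local maximizers of $F$, and the multi-affine geometry of $F$ forces those maximizers to be faces on which the dynamics is stationary; the translation from the dynamical notion of asymptotic stability to the combinatorial statement \emph{``$A$ is a face''} is where the real care is required.
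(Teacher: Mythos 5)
Your proof breaks at the step you yourself flag as the pivot: the claim that asymptotic stability of $A$ forces $F$ to be constant on each component of $A$ and every point of $A$ to be a stationary local maximizer of $F$. That inference is false. A trajectory starting near $A$ can raise $F$ above the value of $F$ at \emph{some} points of $A$ and still converge to $A$ --- namely into points of $A$ where $F$ is larger; attraction only yields $\sup_A F\geq\sup_U F$ over an attracted neighborhood $U$, not pointwise maximality, and an asymptotically stable set of a gradient-like flow may contain saddles together with the orbits flowing out of them. This happens inside the paper's own class of games. Take two players, two resources, payoff $1$ to each player if both pick the same resource and $0$ otherwise; Eq.~(\ref{eq:symetrie}) holds, so by Proposition~\ref{propositionreward} this is an allocation game with repercussion utilities. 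Writing $x,y$ for the probabilities of picking the first resource, the dynamics~(\ref{dynamics}) reads $\dot x=x(1-x)(2y-1)$, $\dot y=y(1-y)(2x-1)$, and a direct computation gives
\begin{equation*}
\frac{d}{dt}(x-y)=-(x-y)\bigl[(1-x)(1-y)+xy\bigr],
\end{equation*}
so $|x-y|$ is nonincreasing on all of $[0,1]^2$ and decays exponentially on the strip $|x-y|\leq\frac{1}{2}$ (there $(1-x)(1-y)+xy\geq\frac{3}{8}$). Hence the diagonal $\{x=y\}$ is a closed asymptotically stable set. Yet it is not a face of the domain, its only stationary points are $(0,0)$, $(\frac{1}{2},\frac{1}{2})$ and $(1,1)$, the potential $F=xy+(1-x)(1-y)$ is not constant along it, and the interior point $(\frac{1}{2},\frac{1}{2})$ it contains is a saddle of $F$, not a local maximum. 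So your step~3 fails, and with it the ``heart'' step (local maximizers form a face) and the concluding identification $A=\Phi$.

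For comparison, the paper's own proof takes an entirely different route: it places $A$ in the smallest face $F_A$, passes to Akin's coordinates $v_{n,i}=\log(q_{n,i}/q_{n,i_n})$ on the relative interior, checks that the transformed field is divergence-free, and invokes Liouville's theorem to exclude attractors strictly inside the relative interior of a face. You should know, however, that the example above defeats that argument as well --- volume preservation on a non-compact domain does not exclude attractors, since here the conserved volume escapes to infinity, i.e.\ toward the corners of the face --- and it in fact shows the proposition is false as literally stated. What is true, and what your Lyapunov reasoning does establish, is the pointwise version: an asymptotically stable \emph{point} $\hat q$ must be stationary and a local maximum of $F$ (no trajectory started where $F>F(\hat q)$ can converge to $\hat q$, since $F$ is nondecreasing along trajectories), and the multilinearity of $F$ you invoke then forces $\hat q$ to be pure (a non-pure local maximizer lies on a segment on which $F$ is constant, and no point of that segment can converge to $\hat q$). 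A corrected development should prove and use this weaker, pointwise (or minimal-attractor) statement rather than the set-valued claim, which no amount of care at your ``hard step'' can rescue.
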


\begin{proof}
  We show that any set which is not a face of the domain is not an
  attractor. This results from a property discovered by E. Akin
  \cite{akin83} which states that the replicator dynamics preserves a
  certain form of volume.

  Let $A$ be an asymptotically stable set of the replicator dynamics.
  Since the domain $\Delta$ is polyhedral, $A$ is included in a face
  $F_A$ of $\Delta$. The support of the face $S(F_A)$ is the set of
  subscripts $(n,i)$ such that there exists $q \in A$ with $q_{n,i} \not=
  0 $ or $1$.  The relative interior of the face is ${\mbox Int}(F_A)
  = \{ q \in F(A) s.t. \forall (n,i) \in S(F_A), 0 < q_{n,i} < 1 \} $.

  Furthermore, it should be clear that faces are invariant under the
  replicator dynamics.  Hence on the face $F_A$, by using the
  transformation $\displaystyle v_{n,i}\bydef
  log(\frac{q_{n,i}}{q_{n,i_n}}), \forall q \in {\mbox Int}(F_A)$, one
  can see that
\[ \displaystyle \frac{\partial}{\partial
    v_{n,i}}\frac{dv_{n,i}}{dt}=0, \forall n \in \mathcal{N}, i \in
  \mathcal{I}.\]

  Up to this transformation, the divergence of the vector field is
  null on $F_A$. Using Liouville's theorem~\cite{akin83}, we infer
  that the transformed dynamics preserves volume in ${\mbox
    Int}(F_A)$. This implies that the set of limit points of the
  trajectories in ${\mbox Int}(F_A)$ is ${\mbox Int}(F_A)$ itself. By
  the previous remark, ${\mbox Int}(F_A)$ is made of equilibrium
  points. By continuity of the vector field, all the points in face
  $F_A$ are equilibria. Finally, since $A$ is asymptotically stable,
  this means that $A = F_A$.
\end{proof}

We say that $s=(s_n)_{n\in \mathcal{N}}$ is a \emph{pure Nash
  Equilibrium} if $\forall n \in \mathcal{N}$, $\forall s'_n \neq s_n,
U_n(s_1 \dots s_n \dots s_{N}) \geq U_n(s_1 \dots s'_n
\dots s_{N})$.
\begin{remark}
  Let $q$ be a pure strategy. We denote by $i_n$ the choice of player
  $n$ such that $q_{n,i_n}=1$. Then, a pure strategy $q$ is a Nash
  equilibrium is equivalent to:
$$\forall n \in \mathcal{N},  \forall j \neq i_n, f_{i_n,n}(q) \geq f_{j,n}(q). $$
\end{remark}

The following proposition comes form a classical result that says that
the pure Nash equilibria are asymptotically stable points of the
replicator dynamics.

\begin{proposition}\label{nasheq}
  If a stable face is reduced to a single point, then this of the replicator dynamics are pure
  Nash equilibria of the allocation game with repercussion utilities.
\end{proposition}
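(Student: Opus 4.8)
The plan is to combine the geometric description of stable faces from Proposition~\ref{pure} with the potential structure established in the Theorem, and then to invoke the characterization of pure Nash equilibria given in the Remark that precedes the statement. First I would observe that a face of the polyhedron $\Delta = \bigotimes_n \Delta_n$ that is reduced to a single point is necessarily a vertex, and that the vertices of a product of simplices are exactly the pure strategies: for each player $n$ there is an action $i_n$ with $\hat{q}_{n,i_n}=1$ and $\hat{q}_{n,j}=0$ for $j\neq i_n$. Thus the stable point $\hat{q}$ is a pure strategy, and by the cited Remark it suffices to prove that $f_{n,i_n}(\hat{q})\geq f_{n,j}(\hat{q})$ for every player $n$ and every $j\neq i_n$.

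Next I would exploit the fact, recalled just before Proposition~\ref{pure}, that the potential $F$ of Eq.~\ref{eq:potentiel} is a strict Lyapunov function, \ie it is strictly increasing along every non-stationary trajectory. From this I expect to deduce that an asymptotically stable point $\hat{q}$ is a local maximizer of $F$ on $\Delta$. Indeed, if $\hat{q}$ were not a local maximum, there would be points arbitrarily close to $\hat{q}$ with strictly larger potential; since trajectories started near $\hat{q}$ converge to $\hat{q}$ while $F$ increases along them, this would force $F$ to exceed its value at $\hat{q}$ in the limit, contradicting the continuity of $F$ together with convergence to $\hat{q}$.

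Finally I would write the first-order optimality condition at the vertex $\hat{q}$. Since $\frac{\partial F}{\partial q_{n,i}}=f_{n,i}(q)$, and since the feasible directions at this vertex are exactly those that transfer probability mass from the chosen action $i_n$ to some other action $j\in\mathcal{I}_n$, local maximality of $F$ yields, for each such direction, a nonpositive directional derivative $f_{n,j}(\hat{q})-f_{n,i_n}(\hat{q})\leq 0$. This is precisely the inequality $f_{n,i_n}(\hat{q})\geq f_{n,j}(\hat{q})$ demanded by the Remark, so $\hat{q}$ is a pure Nash equilibrium of the allocation game with repercussion utilities.

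The step I expect to be the main obstacle is the middle one: passing from the purely dynamical notion of asymptotic stability to the static statement that $\hat{q}$ is a local maximizer of $F$. This requires care because $F$ is only a \emph{strict} Lyapunov function (strictly increasing off the stationary set, not globally concave), so one must argue along trajectories rather than by a one-line convexity argument. Should this prove delicate, an alternative elementary route is to perturb $\hat{q}$ along the edge toward an action $j$ with $f_{n,j}(\hat{q})>f_{n,i_n}(\hat{q})$: by the continuity (indeed multilinearity) of $f_{n,j}$ one has $f_{n,j}(q)-\overline{f}_n(q)>0$ throughout a neighborhood, whence $\frac{dq_{n,j}}{dt}=q_{n,j}\left(f_{n,j}(q)-\overline{f}_n(q)\right)>0$ for small $q_{n,j}>0$, so $\hat{q}$ repels such trajectories and cannot be an attractor, contradicting stability.
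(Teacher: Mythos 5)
Your proposal is correct, but your main route is genuinely different from the paper's. The paper never touches the potential in this proof: it argues by contradiction, purely dynamically, by constructing the explicit perturbed point $q' = \hat{q} + \epsilon e_{n,j} - \epsilon e_{n,i}$ and computing the replicator vector field there exactly, showing that the trajectory through $q'$ keeps moving in the direction $e_{n,j}-e_{n,i}$ (or stands still, in the case of equality) and hence never converges to $\hat{q}$, contradicting the attractor property. Your main argument instead goes through the potential: asymptotic stability plus the strict Lyapunov property forces $\hat{q}$ to be a local maximizer of $F$ on $\Delta$ (your limit argument along converging trajectories is sound: $F(q(t))\geq F(q_0)>F(\hat q)$ while $F(q(t))\to F(\hat q)$ by continuity), and then the first-order conditions at the vertex, together with $\partial F/\partial q_{n,i} = f_{n,i}(q)$, yield exactly the inequalities $f_{n,i_n}(\hat q)\geq f_{n,j}(\hat q)$ required by the Remark. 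What your route buys: it exhibits stable points as local maxima of the potential, a fact the paper anyway invokes later (Theorem~\ref{thm:recap} speaks of points ``locally optimal for the potential function''), so your proof is arguably more informative in this specific setting. What it costs: it works only because allocation games with repercussion utilities admit a potential, whereas the paper's perturbation argument is a general fact about replicator dynamics (asymptotically stable rest points are Nash equilibria in any game) and needs no potential structure. Your fallback argument --- perturbing along the edge toward a strictly better reply $j$ and noting that $\dot q_{n,j}=q_{n,j}(f_{n,j}(q)-\overline{f}_n(q))>0$ throughout a neighborhood, so $q_{n,j}$ cannot decay to $0$ --- is essentially the paper's own proof in a cleaner, neighborhood-based form; the only difference is that the paper also treats the weak-inequality case $f_{n,j}(\hat q) \geq f_{n,i}(\hat q)$, where the perturbed point is stationary and still witnesses the failure of attraction.
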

\begin{proof}
  Let $\hat{q}$ be an asymptotically stable point.  Then $\hat{q}$ is
  a face of $\Delta$ by Proposition \ref{pure} ({\it i.e.} a 0-1
  point), with, say $\hat{q}_{n,i} =1$.  Assume that $\hat{q}$ is not
  a Nash equilibrium. Then, there exists $j\not= i$ such that
  $f_{j,n}(\hat{q}) \geq f_{i,n}(\hat{q}) $.  Now, consider a point
  $q' = \hat{q} + \epsilon e_{n,j} - \epsilon e_{n,i}$.  Notice that
  $f_{n,i}(q') = f_{n,i}(\hat{q})$ since $q'$ and $\hat{q}$ only
  differ on components concerning user $n$.  Then starting in $q'$,
  the replicator dynamics is
  \begin{equation*}
    \begin{array}{l@{\,}l@{\,}l}
      \displaystyle \frac{dq_{n,i}}{dt}(q') &=& \displaystyle q'_{n,i}( f_{n,i}(q')
      - ((1-\epsilon)
      f_{n,j}(q')  + \epsilon  f_{n,i}(q') ) \\[0.6em]
      &=& \displaystyle (1-\epsilon)( f_{n,i}(\hat{q}) - ((1-\epsilon)
      f_{n,j}(\hat{q}) +
      \epsilon f_{n,i}(\hat{q}) ) \\[0.6em]
      &=& \displaystyle -\epsilon (1-\epsilon) ( f_{n,j}(\hat{q}) -
      f_{n,i}(\hat{q}) ) \\[0.6em] &\leq& 0,
    \end{array}
  \end{equation*}

  and $\displaystyle \frac{dq_{n,j}}{dt}(q') = -
  \frac{dq_{n,i}}{dt}(q') \geq 0.  $

  For all users $m \not=n$, $\forall u \in {\mathcal I}_m, q'_{m,u}
  \in \{0,1\}$, then \\$ \displaystyle \frac{dq_{m,k}}{dt}(q') =
  q'_{m,k}\left( f_{n,k}(q') - \sum_{u} q'_{m,u}( f_{n,u}(q')) \right)
  = 0.$

  Therefore starting from $q'$, the dynamics keeps moving in the
  direction $e_{n,j}-e_{n,i}$ (or stays still) and does not converge
  to $\hat{q}$.  This contradicts the fact that $\hat{q}$ is
  asymptotically stable.
\end{proof}

\begin{proposition}
  Allocation games with repercussion utilities admit at least one pure
  Nash equilibrium.
\end{proposition}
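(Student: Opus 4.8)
The plan is to exhibit a pure Nash equilibrium explicitly as a maximizer of the potential, taking advantage of the fact that the set of pure strategies is finite. Assuming each $\mathcal{I}_n$ is nonempty, there are finitely many pure profiles, so I can choose $s^*=(s^*_n)_{n\in\mathcal{N}}$, with $s^*_n=i_n$, that maximizes the potential $F$ of Eq.~\ref{eq:potentiel} over all pure strategies. I then claim that $s^*$ is a pure Nash equilibrium of the allocation game with repercussion utilities.

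The crucial structural observation is that $F$ is affine in each player's own strategy when the strategies of the others are held fixed. Indeed, the theorem above gives $f_{n,i}(q)=\partial F/\partial q_{n,i}(q)$, and it was already noted that $f_{n,i}(q)$ depends only on $(q_{m,i})_{m\neq n}$ and not on $q_n$. Hence $\partial^2 F/\partial q_{n,i}\partial q_{n,j}=0$ for all $i,j\in\mathcal{I}_n$, so for fixed $q_{-n}$ the map $q_n\mapsto F(q_n,q_{-n})$ is affine, with constant gradient $(f_{n,i}(q))_{i\in\mathcal{I}_n}$. In particular, if player $n$ unilaterally switches from $i_n$ to some $j\neq i_n$, producing the pure profile $s'$ that agrees with $s^*$ for every player but $n$, then
\begin{equation}
F(s')-F(s^*)=f_{n,j}(s^*)-f_{n,i_n}(s^*),
\end{equation}
since only the $q_n$ block changes and the gradient is constant along the segment joining $s^*$ to $s'$.

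I now conclude. By maximality of $F$ at $s^*$ over pure strategies, the left-hand side of the displayed identity is $\leq 0$ for every $n\in\mathcal{N}$ and every $j\neq i_n$, so that $f_{n,i_n}(s^*)\geq f_{n,j}(s^*)$ for all such $j$. This is precisely the characterization of a pure Nash equilibrium recorded in the remark just before Proposition~\ref{nasheq}, hence $s^*$ is a pure Nash equilibrium and the existence claim follows.

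I expect the only delicate point to be the justification that $F$ is affine in each player's own mixed strategy; once the identity $f_{n,i}=\partial F/\partial q_{n,i}$ together with the independence of $f_{n,i}$ from $q_n$ is invoked, everything else is immediate and no congestion-game-specific structure is needed. An alternative, more dynamical, route would note that the global maximizer of the strict Lyapunov function $F$ is asymptotically stable for the replicator dynamics; by Proposition~\ref{pure} such an attractor is a face, which by the affinity (in fact multiaffinity) of $F$ must reduce to a single vertex, and Proposition~\ref{nasheq} would then identify it as a pure Nash equilibrium. I would keep the direct potential-maximization argument as the main proof, however, since it avoids invoking the dynamics and relies only on the finiteness of the pure strategy set.
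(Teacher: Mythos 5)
Your proof is correct, and it takes a genuinely different route from the paper. The paper argues dynamically: since $F$ is a strict Lyapunov function for the replicator dynamics on the compact domain $\Delta$, the set of maximizers of $F$ is asymptotically stable; by Proposition~\ref{pure} such a set is a face of $\Delta$ (hence contains pure points), and a Proposition~\ref{nasheq}-type argument identifies its points as Nash equilibria. You instead give the classical static argument for finite potential games (in the spirit of Monderer--Shapley): pick a maximizer $s^*$ of $F$ over the finite set of pure profiles, use the gradient identity $f_{n,i}=\partial F/\partial q_{n,i}$ together with the fact that $f_{n,i}$ does not depend on $q_n$ (both already established in the paper) to conclude that $F$ is affine in each player's own block, so that a unilateral deviation changes $F$ by exactly $f_{n,j}(s^*)-f_{n,i_n}(s^*)$; maximality then yields the Nash inequalities of the remark preceding Proposition~\ref{nasheq}. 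Your derivation of the key identity is sound --- it is precisely the exact-potential property of $F$ restricted to pure profiles --- and the affinity claim follows correctly from the vanishing of the mixed second partials $\partial^2 F/\partial q_{n,i}\partial q_{n,j}$. What your approach buys is self-containedness and robustness: it needs no Lyapunov theory, no asymptotic stability, and in particular avoids the somewhat delicate steps the paper glosses over (asserting that the argmax set is asymptotically stable, and invoking ``a similar argument as in Proposition~\ref{nasheq}'' even though that proposition is stated for stable faces reduced to a single point). What the paper's approach buys is integration with the rest of its program: the same dynamical machinery that proves existence also underlies the convergence of the algorithm, so the equilibrium is exhibited not merely as a maximizer but as an attractor of the dynamics that the algorithm tracks. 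Your closing observation --- that the dynamical route is available as an alternative --- essentially reconstructs the paper's own proof.
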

\begin{proof}
  Allocation games with repercussion utilities admit a potential that
  is a Lyapunov function of their replicator dynamics.  Since the
  domain $\Delta$ is compact, the Lyapunov function reaches its
  maximal value inside $\Delta$.  The argmax of the Lyapunov function
  form an asymptotically stable sets $A$ of equilibrium points.  By
  Proposition~\ref{pure}, these  sets  are  faces of the domain (hence
  contain  pure points).  All points in $A$ are Nash equilibrium
  points by using a similar argument as in Proposition \ref{nasheq}.
  This concludes the proof.
\end{proof}

\subsection{A Stochastic Approximation of the Replicator Dynamics.}\label{s-approx}

In this section, we present an algorithmic construction of the
players' strategies that selects  a pure Nash equilibrium
for the game with repercussion utilities.
A similar learning mechanism is proposed in~\cite{sastry94}.
We now assume a discrete time, in which at each epoch $t$, players take
random decision $S_n(t)$ according to their strategy $q_n(t)$, and
update their strategy profile according to their current payoff. We look at
the following algorithm ($\forall n\in \mathcal{N}, i\in
\mathcal{I}_n$):
\begin{equation}\label{algorithm}
q_{n,i}(t+1)=q_{n,i}(t)+\epsilon \, r_n(\ell^{S_n}(S))\,(1_{S_n=i}-q_{n,i}(t)),
\end{equation}
where $S_n=S_n(t)$, $\epsilon > 0$ is the constant step size of the
algorithm, and $1_{S_n=i}$ is equal to $1$ if $S_n=i$, and $0$
otherwise. Recall that we assume that $r_n(\ell^{S_n}(S))\geq
0$. Then, if $\epsilon$ is small enough, $q_{n,i}$ remains in the
interval $[0;1]$. Strategies are initialized with value
$q(0)=q_0$. The step-size is chosen to be constant in order to have
higher convergence speed than with decreasing step size.

One can notice that this algorithm is fully distributed, since for
each player $n$, the only information needed is
$r_n(\ell^{S_n}(S))$. Furthermore, at every iteration, each player
only need the utility on one action (which is randomly chosen). In
applicative context, this means that a player does not have to scan
all the action before update her strategy, what would be costly.

Below, we provide some intuition on why the algorithm is characterized
by a differential equation, and how it asymptotically follows the
replicator dynamics~(\ref{dynamics}).  Note that we can re-write
(\ref{algorithm}) as:
$$q_{n,i}(t+1)=q_{n,i}(t)+\epsilon \, b(q_{n,i}(t),S_n(t)) .$$
Then, we can split $b$ into its expected and martingale components:
$$
\begin{array}{l @{\,=\,}  l}
  \displaystyle \overline{b}(q_{n,i}(t))&\displaystyle\mathbb{E}[b(q_{n,i}(t),S_n(t))]\\
  \displaystyle\nu(t)&\displaystyle b(q_{n,i}(t),S_n(t))-\overline{b}(q_{n,i}(t)) .
\end{array}
$$
Again, (\ref{algorithm}) can be re-written as:
$$\frac{q_{n,i}(t+1)-q_{n,i}(t)}{\epsilon}=\overline{b}(q_{n,i}(t))+\nu(t). $$
As $\nu(t)$ is a random difference between the update and its
expectation, then by application of a law of large numbers, for small
$\epsilon$, this difference goes to zero. Hence, the trajectory of
$q_{n,i}(t)$ in discrete time converges to the trajectory in
continuous time of the differential equation:
\begin{equation*} \left\{
\begin{array}{ll}
\displaystyle \frac{dq_{n,i}}{dt}=\overline{b}(q_{n,i}), & \text{and}\\[0.8em]
q(0)=q_0.
\end{array} \right.
\end{equation*}

Let us compute $\overline{b}(q_{n,i})$ (for ease of notations, we
omit the dependence on time $t$):
$$
\begin{array}{ll}
  \displaystyle \overline{b}(q_{n,i})&\displaystyle =\mathbb{E}[b(q_{n,i},S_n)]\\
  &\displaystyle  =q_{n,i}(1-q_{n,i})f_{n,i}(q)-\sum_{j\neq
    i}q_{n,j}q_{n,i}f_{n,j}(q) \\
  &\displaystyle  =q_{n,i}(f_{n,i}(q)-\sum_jq_{n,j}f_{n,j}(q) ) \\
  &\displaystyle  =q_{n,i}(f_{n,i}(q)-\overline{f}(q) ).  
\end{array}
$$
Then, $q_{n,i}(t)$ follows the replicator dynamics.

Consider a typical run of algorithm (\ref{algorithm}) over a system made of
$10$ users with $5$ choices over $10$ networks. The figure displays
for one user, the probabilities of choosing each of the $5$ possible
choices. As user has $5$ possible choices, at time epoch $0$, each
choice has probability $0.2$. Then, as $t$ grows, all the
probabilities except one, tend to $0$.

\begin{figure}[htb]
\begin{center}
\includegraphics[width=.8\linewidth]{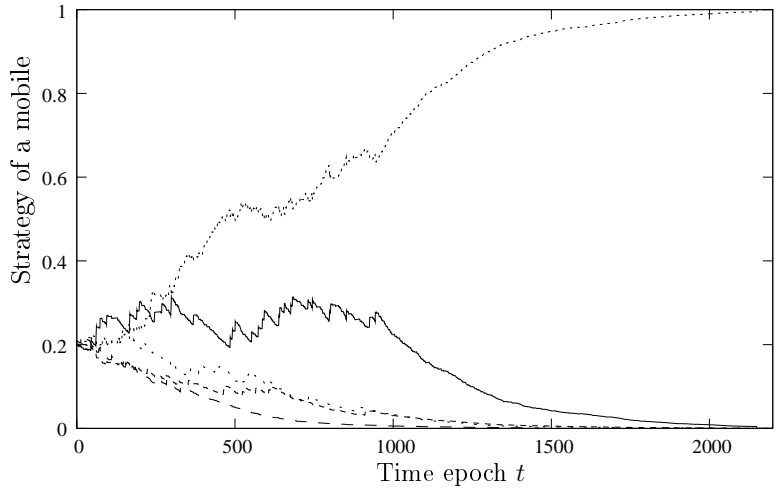}
\end{center}
\caption{Convergence of the probability values for each of the $5$
  possible choices of one user.}
\label{fig:pures}
\end{figure}

\subsection{Properties of the algorithm.}

The algorithm is designed so as to follow the well-known replicator
dynamics. Furthermore, the stochastic aspect of the algorithm provides
some stability to the solution: whereas the deterministic solution of
a replicator dynamics may converge to a saddle point, this cannot
happen with the stochastic algorithm. The use of repercussion
utilities provides a potential to the companion game and it is known
that the potential is a Lyapunov function for the replicator dynamics,
hence the potential is increasing along the trajectories. The
following theorem aggregates the main results about the algorithm
applied on repercussion utilities.

\begin{theorem}\label{thm:recap}
  The algorithm~(\ref{algorithm}) weakly converges to a set of pure
  points that are locally optimal  for the potential function, and
  Nash equilibria of the allocation game with repercussion utilities.
\end{theorem}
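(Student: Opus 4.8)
The plan is to glue together the two halves of this section: the stochastic-approximation argument of Section~\ref{s-approx} that ties algorithm~(\ref{algorithm}) to the replicator dynamics~(\ref{dynamics}), and the geometric results (Propositions~\ref{pure} and~\ref{nasheq}) that describe where that dynamics can stabilize. I would first establish the weak convergence to the ODE, then rule out the undesirable limit points (saddles and relative interiors of positive-dimensional equilibrium faces), and finally read off the three claimed structural properties — purity, local optimality for $F$, and the Nash property — from results already proved.

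For the weak convergence, I would keep the decomposition of Section~\ref{s-approx}, writing the update as $q(t+1)=q(t)+\epsilon(\overline{b}(q(t))+\nu(t))$, and check the hypotheses of a standard constant-step-size stochastic-approximation theorem (of Kushner--Yin type): the drift $\overline{b}$ is Lipschitz on the compact polyhedron $\Delta$, the term $\nu(t)$ is a bounded martingale difference for the natural filtration, and the iterates remain in $\Delta$ once $\epsilon$ is small. The computation already performed in Section~\ref{s-approx} gives $\overline{b}(q_{n,i})=q_{n,i}(f_{n,i}(q)-\overline{f}_n(q))$, so the limiting ODE is exactly~(\ref{dynamics}). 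The conclusion is that the piecewise-constant interpolation of the iterates converges weakly, as $\epsilon\to 0$, to trajectories of the replicator dynamics, and that the asymptotic occupation of the process concentrates on the recurrent set of the flow. Since $F$ is a strict Lyapunov function (Section~\ref{s-repdy}), that recurrent set is contained in the set of equilibrium points of~(\ref{dynamics}).

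The main obstacle, and the step that genuinely exploits the stochastic nature of the algorithm, is to discard the equilibria that are not asymptotically stable. Here I would invoke the non-convergence-to-repellers results for stochastic approximation (of the Pemantle and Brandi\`ere--Duflo type): because the martingale noise $\nu(t)$ has a nonzero component along every unstable direction of a linearly unstable equilibrium, the iterates reach such a point with probability zero. This removes the saddle points to which the purely deterministic replicator dynamics might otherwise converge. It also removes the relative interiors of positive-dimensional equilibrium faces: by the volume-preservation property of E.~Akin~\cite{akin83} used in the proof of Proposition~\ref{pure}, the induced flow on the interior of such a face preserves volume and hence admits no attracting point, so these interiors act as repellers for the noisy process and cannot be limit sets. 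Consequently the algorithm converges to asymptotically stable sets of the replicator dynamics.

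It then remains to translate this into the statement. By Proposition~\ref{pure} every asymptotically stable set is a face of $\Delta$ consisting of equilibria, and combined with the exclusion of positive-dimensional face interiors above, the surviving stable sets reduce to single vertices, that is, to pure points. Purity is corroborated intrinsically by the fact that $F$ is multilinear — linear in each $q_n$ with the other players fixed — so any local maximum of the potential is necessarily attained at a vertex of $\Delta$. These stable vertices are local maxima of $F$ because $F$ strictly increases along non-stationary trajectories of the dynamics it governs, and they are pure Nash equilibria by Proposition~\ref{nasheq}. This gives exactly the assertion that algorithm~(\ref{algorithm}) weakly converges to pure points that are locally optimal for the potential $F$ and are Nash equilibria of the allocation game with repercussion utilities.
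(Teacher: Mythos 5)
Your opening step (weak convergence via a Kushner--Yin constant-step theorem to the asymptotically stable sets of the replicator dynamics, which by Proposition~\ref{pure} are faces made of equilibria) matches the paper. The genuine gap is in how you rule out non-pure limit points \emph{inside} those faces. You invoke Pemantle/Brandi\`ere--Duflo non-convergence-to-unstable-points results, arguing that the relative interior of a positive-dimensional equilibrium face ``acts as a repeller'' because the induced flow there preserves volume. But those theorems apply to \emph{linearly unstable} equilibria, and points in the relative interior of a stable face are nothing of the sort: by Proposition~\ref{pure} every point of such a face is an equilibrium of the replicator dynamics, so the flow restricted to the face is trivial, the linearization along the face directions is zero, and the transverse directions are attracting. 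These points are degenerate and neutrally stable; volume preservation of a trivial flow gives no repulsion at all. So your argument removes saddles lying off the stable faces, but not the mixed points inside them --- which is precisely what the theorem must exclude, since such points are mixed strategies (perpetual handovers). Your fallback via multilinearity also fails: $F$ is constant on an equilibrium face (at a stationary point $\sum_i f_{n,i}\,dq_{n,i}=\overline{f}_n\sum_i dq_{n,i}=0$ along face directions), so every interior point of the face is a non-strict local maximum; multilinearity does not force local maxima to sit only at vertices.

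The paper closes this case by a stochastic mechanism your proposal misses, and it is the heart of its proof. On a stable face the drift of algorithm~(\ref{algorithm}) vanishes, so the process \emph{restricted to the face} is a bounded martingale; it therefore converges almost surely, and because the step size $\epsilon$ is constant, its limit must be a fixed point of the noisy iteration itself, i.e.\ a pure point (the update $r_n(\ell^{S_n}(S))(1_{S_n=i}-q_{n,i})$ vanishes identically only there). A coupling argument then transfers this to the true trajectory: whenever it comes within distance $\nu/k$ of the face, it follows the in-face martingale for a time $T_k$ with probability tending to one as $k\to\infty$, so the set of sample paths accumulating on a closed set of non-pure points of the face has measure zero. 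It is this martingale-plus-coupling step (not a repeller-avoidance theorem) that makes the selection of pure points work, and it is also why the constant step size is essential --- with decreasing steps the algorithm could stop at a mixed point of a stable face, as the paper notes. The final reading-off of the Nash property from Proposition~\ref{nasheq} and of local optimality from the Lyapunov property of $F$ is the same in your proposal and in the paper.
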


\begin{proof}
\begin{itemize}

\item The algorithm is a stochastic algorithm with constant step
  size. From Theorem 8.5.1 of Kushner and Yin \cite{kushner97}, we
  infer that the algorithm weakly converges as $\epsilon \to 0$ to the
  limit points of the trajectories of an ode, which is, in our case,
  the replicator dynamics (\ref{dynamics}) (it is a particular case of
  the theorem in which conditions of the theorem hold: all variables
  are in a compact set and the dynamics is continuous). Furthermore,
  the set to which the sequence $q(t)$ converges is an asymptotically
  stable set of the replicator dynamics, because unstable equilibria
  are avoided (the noise verify condition of \cite{duflo96}, Theorem
  1). From Proposition \ref{pure}, the only asymptotically stable sets
  of the dynamics are faces. Hence the algorithm converges to faces
  which are asymptotically stable.

\item We now show that the dynamics in such a face (denoted by $F$)
  converges almost surely to a pure point.  Let $\hat{q}(0) \in
  F$. Then, the trajectory $\hat{q}(t)$ following the algorithm stays
  in $F$. Furthermore:
  $$
  \begin{array}{l}
    \displaystyle \mathbb{E}[\hat{q}_{n,i}(t+1)|\hat{q}(t)]\\ 
    \displaystyle =\hat{q}_{n,i}(t)(\hat{q}_{n,i}(t)+\epsilon
    f_{n,i}(\hat{q}(t))(1-\hat{q}_{n,i}(t)))\\
    \hspace{5cm}\displaystyle +\sum_{j \neq i}\hat{q}_{n,j}(\hat{q}_{n,i}(t)- \epsilon
    f_{n,j}(\hat{q}(t))\hat{q}_{n,i}(t))\\
    \displaystyle =\hat{q}_{n,i}(t)+\epsilon
    q_{n,i}(f_{n,i}(\hat{q}(t))-\overline{f}_n(\hat{q}(t))).
  \end{array}
$$
Since at a mixed stationary point
$f_{n,i}(\hat{q})=\overline{f}_n(\hat{q})$, then
$\mathbb{E}[\hat{q}_{n}(t+1)|\hat{q}(t)]=\hat{q}_n(t)$. Hence the
process $(\hat{q}_n(t))_t$ is a martingale, and is almost surely
convergent. The process converges necessarily to a fixed point of the
iteration $ \hat{q}_{n,i}(t+1)=\hat{q}_{n,i}(t)+\epsilon \,
r_n(\ell^{s_n}(s))\,(1_{s_n=i}-\hat{q}_{n,i}(t))$, and the sole fixed
points are pure points (since the step size $\epsilon$ is constant).

\item Let $(q(t))_{t \in \mathbb{N}}$ be the random process given by
  the algorithm. Suppose that it admits a closed set $A$ of limit
  points that contains no pure points, such that $A \subset F$, where
  $F$ is the smallest face of the domain $\Delta$ that contains
  $A$. Assume, for ease of notations that $F=\Delta \cap \{q:
  q_{n,i}=0\}$. By Proposition~\ref{pure}, $F$ is a face of $\Delta$
  that is set of stationary points.

  Denote by $A^\delta$ the $\delta-neighborhood$ of $A$. We suppose
  that $\delta$ is small enough to ensure that $A^\delta$ does not
  contain any pure point (this is possible since $A$ is a closed
  set). Let $\mathcal{A}$ the set of $\omega$ such that $\forall
  \omega \in \mathcal{A}, \forall \delta >0, \forall T \in \mathbb{N},
  \exists t>T \text{ s.t. } q(t)\in A^\delta$. We now show that the
  Lebesgue measure of $\mathcal{A}$, denoted $\mu(\mathcal{A})$, is
  null.  Intuitively, as
  the algorithm goes near the face, the probability that it follows a
  martingale in the face is closed to $1$, and then the trajectory
  will not approach the face.

  Let $\hat{\mathcal{A}}$ be the set of $\omega$ such that the
  martingale (in $F$) $\hat{q}(t)(\omega)$ converges to a pure point
  for every initial condition in $F$. The measure of
  $\hat{\mathcal{A}}$ is $1$.  Let $s(\omega)=\inf\{T:
  \forall\hat{q}(0)\in F, \forall t>T, \hat{q}(t)(\omega)\notin
  A^\delta\}$. $s(\omega)$ is the maximal time such that for every
  initial condition in $F$, the martingale is outside
  $A^\delta$. Since $F$ is compact, it follows that for all $\omega
  \in \hat{\mathcal{A}}$, $s(\omega)$ is finite. Let
  $\hat{\mathcal{A}}(T^+) \subset \hat{\mathcal{A}} $
  (resp. $\hat{\mathcal{A}}(T^-)$) be the set of $\omega$ such that
  $s(\omega)>T$ (resp. $s(\omega)\leq T$). Then,
  $\mu(\hat{\mathcal{A}}(T^+))\to 0$ when $T \to
  \infty$.

\item Let $\delta=\frac{\nu}{k}$, where $\nu>0$ and $k\in
  \mathbb{N^*}$. If a trajectory $q(t)(\omega)$ is such that there
  exists $T$ with $\displaystyle q_{n,i}(T)(\omega)<\frac{\nu}{k}$,
  then there exists a duration $T_k$ such that $\forall t \in
  [T-T_k;T], q_{n,i}(t)(\omega)<\nu$, where $r \bydef \max_n \max_s
  r_n(\ell^{s_n}(s))$. We now show that $\displaystyle T_k \bydef
  \min(T, -\frac{ln(k)}{ln(1-\epsilon r)})$. Indeed, $q_{n,i}(t+1)
  \geq q_{n,i}(t)-\epsilon r q_{n,i}(t)$. Then $q_{n,i}(T) \geq
  q_{n,i}(T-T_k)(1-\epsilon r )^{T_k}$. It follows $\displaystyle
  q_{n,i}(T-T_k) \leq q_{n,i}(T)(1-\epsilon r )^{-T_k}\leq
  \frac{\nu}{k}(1-\epsilon r )^{-T_k}=\nu$.

\item Let $p(\nu,T_k)$ be the probability that $q(t)(\omega)$, at
  distance less than $\displaystyle \delta=\frac{\nu}{k}$ of $F$ at
  time $t_0$, does not follow the martingale $\hat{q}(t)(\omega)$
  defined by $\hat{q}(t_0)(\omega)=proj_F(q(t_0)(\omega))$, during
  time $T_k$ (hence $q(t_0+T_k)(\omega)$ can be inside
  $A^\delta$). Then:
  $$\forall k \in \mathbb{N}, \mu(\mathcal{A})
  \leq
  \mu(\hat{\mathcal{A}}(T_k^+))+p(\nu,T_k)\mu(\hat{\mathcal{A}}(T_k^-)).$$
  Indeed, let $k \in \mathbb{N}$ and $\omega \in \mathcal{A}$. There
  is $T$ with $\displaystyle q_{n,i}(T)(\omega)\leq\frac{\nu}{k}$. For
  simplicity, suppose that $T=T_k$. Then, either $\omega \in
  \hat{\mathcal{A}}(T_k^+)$, either $\omega \in
  \hat{\mathcal{A}}(T_k^-)$, either the complementary set in
  $\mathcal{A}$ whose measure is $0$. If $\omega \in
  \hat{\mathcal{A}}(T_k^-)$, then $q(T)(\omega) \in A^\delta$ with
  probability $p(\nu,T_k)$.

  We now show that, by taking an appropriate $\nu=\nu(k)$, then
  $p(\nu(k),T_k)\to 0$, when $k\to \infty$. This, and
  the fact that $\mu(\hat{\mathcal{A}}(T_k^+))\to 0$ when
  $k\to \infty$ implies that $\mu(\mathcal{A})=0$.

\item Suppose $\omega \in \hat{\mathcal{A}}(T^-)$: let us define
  $d_t=d(q(t),\hat{q}(t))$ the distance between the interior
  trajectory, and the martingale trajectory at time $t$. Then, one can
  check that, under $\omega$, $d_{t+1}\leq d_t(1+\epsilon r)$, with
  probability at least $1-p(d_t)$ with $p(d_t)\bydef d_t\sum_{n=1}^N
  I_n$. Let $K \bydef \sum_{n=1}^N I_n$. Indeed the vector of actions
  $s(q)$ is the same as $s(\hat{q})$ as long as $\omega$ picks the
  same choice for all players. The contrary happens with probability
  $\displaystyle \sum_{n=1}^N\sum_{i=1}^{I_n}|\sum_{k=1}^i
  q_{n,k}(t)-\hat{q}_{n,k}(t)|$. See Figure~\ref{coupling} for an
  illustration of this. Then, the lower bound follows from the
  inequality $|\sum_{k=1}^i q_{n,k}(t)-\hat{q}_{n,k}(t)|< d_t$.

\begin{figure}[htb]
  \begin{center}
    \includegraphics[width=0.6\linewidth]{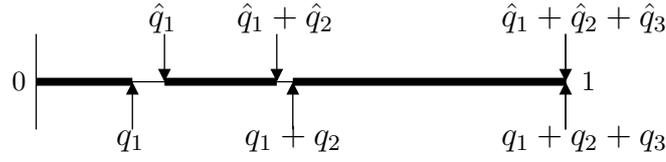}
    \caption{The thick line shows the measure of the set of all $\omega$
      corresponding to the same choices for player  1 (with 3
      choices). \label{coupling}}
  \end{center}
\end{figure}

Since $d_0<\nu$, then $d_{T_k} > \nu (1+\epsilon r)^{T_k} $ with
probability less than $\displaystyle p(\nu,T_k)$ where $\displaystyle
p(\nu,T_k) = 1-\prod_{t=0}^{T_k}(1-p(d_t)) \leq 1 -
\prod_{t=0}^{T_k}(1-K\nu(1+\epsilon r)^t)$. Take $\displaystyle
\nu=\frac{(1+\epsilon r)^{-2T_k}}{K}$. When $k\to \infty$,
$d_{T_k}$ goes to $0$, and $p(\nu,T_k)$ goes to $0$. Hence,
$q(t)(\omega)$ does not follow $\hat{q}(t)(\omega)$ for $t=0$ to
$t=T_k$ with probability $\displaystyle p(\nu,T_k)$, and then can be
inside $A^\delta$.

\item Finally, the fact that the pure point attained is a Nash
  equilibrium follows from Proposition \ref{nasheq}.
\end{itemize}
\end{proof}

One can notice that the convergence of the algorithm to a pure point
relies on the fact that the step size $\epsilon$ is constant. If it
were decreasing, the algorithm would converge to an equilibrium point
in a stable face, that need not be pure.

The combination of both algorithm (\ref{algorithm}) and repercussion
utilities provides an iterative method to select a pure allocation
which is stable, and locally optimal. This can be viewed as a
\emph{selection algorithm}.

\subsection{Global Maximum vs Local Maximum for the Selection Algorithm.}
\label{s-attract}

In the previous section, we showed that the algorithm converges to a
local maximum of the potential function. This induces that if there is
only one local maximum, the algorithm attains the global maximum. This
arises for instance if the potential function is concave. Without the
uniqueness of the local maximum, there is no guaranty of convergence
to the global maximum. Hence, assume there are multiple local maxima
(that are pure points), which is common when the payoffs are
random. Each of them is an attractor for the replicator dynamics. In
this section, we investigate the following question: does the initial
point of the algorithm belongs to the basin of attraction of the
global maximum?

Since every player has no preference at the beginning of the
algorithm, we assume that initially, $\forall n \in \mathcal{N},i \in
\mathcal{I}_n, q_{n,i}(0)=\frac{1}{|\mathcal{I}_n|} $. In the
following sub-section we show that in the case of two players, both
having two choices, $q(0)$ is in the basin of attraction of the global
maximum. Then, in Subsections~\ref{sec:3play} and \ref{sec:3choices},
we give counter examples to show that the result does not extend to
the general case of more than two players or more than two choices.

\subsubsection{Case of two players and two choices}

\begin{proposition}\label{prop:2*2}
  In a two players, two actions allocation game with repercussion
  utilities, the initial point of the algorithm is in the basin of
  attraction of the global maximum.
\end{proposition}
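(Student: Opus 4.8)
The plan is to reduce to an explicit planar system and exploit a first integral of the replicator dynamics. Since each player has two actions, write $x=q_{1,1}$ and $y=q_{2,1}$, so the strategy domain is the unit square and, for two actions, the dynamics~(\ref{dynamics}) becomes
\[ \dot x = x(1-x)(f_{1,1}-f_{1,2}), \qquad \dot y = y(1-y)(f_{2,1}-f_{2,2}). \]
Writing the $2\times2$ repercussion payoffs as in the Example and setting $\kappa=a-b-c+d$, $\beta=b-d$, $\gamma=C-D$, the symmetry condition Eq.~\ref{eq:symetrie} forces $a-b-c+d=A-B-C+D$, so both differences are affine with the \emph{same} slope: $f_{1,1}-f_{1,2}=\kappa y+\beta$ and $f_{2,1}-f_{2,2}=\kappa x+\gamma$. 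Consequently the restriction of the potential $F$ to the square is the bilinear function $\Phi(x,y)=\kappa xy+\beta x+\gamma y$, whose maximum over the square is attained at a corner (a pure Nash equilibrium, by Proposition~\ref{nasheq}) and which strictly increases along every non-stationary trajectory.

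First I would classify the rest points: the four corners together with at most one interior equilibrium $(x^*,y^*)=(-\gamma/\kappa,-\beta/\kappa)$, which is a saddle when it lies in the open square. If it does not, a single corner is globally attracting and one checks directly that it is the global maximizer, so assume the saddle is interior. By Proposition~\ref{pure} the only sinks are corners, namely the diagonal pair $(0,0),(1,1)$ (coordination, $\kappa>0$) or the off-diagonal pair (anti-coordination, $\kappa<0$), and the proposition reduces to showing that the center $(1/2,1/2)$ lies on the side of the separatrix (the stable manifold of the saddle) containing the higher-$\Phi$ corner.

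The decisive tool is that this planar system is integrable: dividing the two equations and separating variables yields the conserved quantity
\[ H(x,y)=\gamma\ln x-(\kappa+\gamma)\ln(1-x)-\beta\ln y+(\kappa+\beta)\ln(1-y), \]
so trajectories are level curves of $H$ and the separatrix lies in $\{H=H^*\}$ with $H^*=H(x^*,y^*)$. A direct computation gives the two facts that drive the argument: $H(1/2,1/2)=0$, and, after the $\ln\kappa$ terms cancel, $H^*=\psi(\beta)-\psi(\gamma)$ where $\psi(s)=(-s)\ln(-s)+(\kappa+s)\ln(\kappa+s)$ is convex and symmetric about $s=-\kappa/2$. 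Since $H$ is strictly monotone along the diagonal, the level set $\{H=H^*\}$ meets the diagonal in a single point, and the center (at level $H=0$) converges to $(1,1)$ precisely when $0$ lies on the $(1,1)$-side of $H^*$ along the diagonal. It then remains to check that this placement coincides with the global-maximum condition $\Phi(1,1)-\Phi(0,0)=\kappa+\beta+\gamma\gtrless0$, which is where the convexity and symmetry of $\psi$ enter, reducing the claim to a single inequality comparing $\psi(\beta)-\psi(\gamma)$ with $0$ subject to the constraint on the midpoint $(\beta+\gamma)/2$ relative to $-\kappa/2$.

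The main obstacle is exactly this last orientation bookkeeping: one must track, across all sign patterns of $\kappa,\beta,\gamma$, the direction in which $H$ increases toward each sink and verify that the sign of $H^*=\psi(\beta)-\psi(\gamma)$ always agrees with the sign of $\kappa+\beta+\gamma$. The degenerate tie $\kappa+\beta+\gamma=0$ (equal potentials) must be handled separately: there the center sits on the saddle's level set, but it lies on the \emph{unstable} manifold branch, so one checks that it still flows off to the appropriate corner rather than into the saddle — exactly as the invariant diagonal does in the symmetric case $\beta=\gamma$.
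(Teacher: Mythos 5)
Your proposal takes a genuinely different route from the paper's. The paper works with the same planar reduction, assumes WLOG that the two local maxima sit at $(0,0)$ and $(1,1)$ with $F(1,1)>F(0,0)$ strictly, and then shows that $V(x,y)=|1-x|+|1-y|$ is a strict Lyapunov function on the triangle $E=\{x+y>1\}$: the restriction of $\dot V$ to each vertical segment is a quadratic in $y$ with negative discriminant and is negative on the diagonal $\{x+y=1\}$, so the trajectory from $(1/2,1/2)$ is swept into $(1,1)$. You instead exploit integrability. Your key computations are all correct: $H$ is indeed a first integral (the two-line check $H_x\dot x+H_y\dot y\equiv 0$), $H(1/2,1/2)=0$, and $H^{*}=\psi(\beta)-\psi(\gamma)$ with $\psi$ strictly convex and symmetric about $-\kappa/2$. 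Moreover, the ``orientation bookkeeping'' you present as the main obstacle actually closes in one line, because the standing assumption (two attracting corners, hence an interior saddle) pins down every sign: after relabelling, $\kappa>0$ and $\beta,\gamma\in(-\kappa,0)$, and for a strictly convex function symmetric about $-\kappa/2$ one has $\psi(\beta)>\psi(\gamma)$ iff $(\beta+\kappa/2)^2>(\gamma+\kappa/2)^2$ iff $(\beta-\gamma)(\kappa+\beta+\gamma)>0$; combined with $H(t,t)=(\gamma-\beta)\ln\bigl(t/(1-t)\bigr)$ this gives exactly ``$(1/2,1/2)$ lies on the $(1,1)$ side of the separatrix iff $\Phi(1,1)>\Phi(0,0)$''. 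Two points are asserted rather than proved and should be spelled out: (i) interior trajectories can only converge to the two sinks or to the saddle (strict Lyapunov function plus the fact that, under the standing assumption, edges carry no rest points besides corners), so the diagonal, which meets $\{H=H^{*}\}$ in a single point $t^{*}$, changes basin exactly there; (ii) the anti-coordination case $\kappa<0$ reduces to the coordination case by relabelling one player's actions. Compared with the paper, your argument is longer but yields more: it identifies the basin boundary exactly as a level curve of $H$, whereas the paper's Lyapunov function only certifies that one triangle lies in the right basin.

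The genuine error is your final claim about the tie $\kappa+\beta+\gamma=0$. In that case the center does \emph{not} flow off to a corner: it converges to the saddle, and Proposition~\ref{prop:2*2} itself fails. Indeed, when $\kappa+\beta+\gamma=0$ the anti-diagonal $\{x+y=1\}$ is invariant, since on it $\frac{d}{dt}(x+y)=x(1-x)\bigl(\kappa(1-x)+\beta\bigr)+x(1-x)\bigl(\kappa x+\gamma\bigr)=x(1-x)(\kappa+\beta+\gamma)=0$, and it contains both the saddle (because $x^{*}+y^{*}=(-\gamma-\beta)/\kappa=1$) and the center. On this invariant segment the only rest points are the two repelling corners and the saddle, and the one-dimensional flow $\dot t=t(1-t)\bigl(\kappa(1-t)+\beta\bigr)$ points toward the saddle on both sides of it; hence the open segment is precisely the stable manifold, and the center converges to the saddle (if in addition $\beta=\gamma=-\kappa/2$, the center \emph{is} the saddle). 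Your analogy is therefore reversed: in the symmetric non-tie case $\beta=\gamma$ the invariant line through the center is the diagonal, which is the unstable manifold, while in the tie case it is the anti-diagonal, which is the stable one. The correct resolution is the paper's (implicit) one: exclude the tie, i.e. assume $\Phi(1,1)\neq\Phi(0,0)$ — this degenerate situation is exactly the one where ``the global maximum'' is not unique, and it is ruled out by the strict inequality $k_{1,1}>k_{0,0}$ posited at the start of the paper's proof.
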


\begin{proof}
  Both players $1$ and $2$ can either take action $a$ or $b$. We
  denote by $x$ the probability for player $1$ to choose $a$, and by
  $y$ the probability for $2$ to choose $a$. We denote by
  $K=(k_{i,j})_{i,j\in \{0,1 \}}$ the matrix such that $k_{i,j}\bydef
  F(i,j)$, where $F(x,y)$ is the potential function\footnote{Actually,
    here, the derivative of the potential is equal to the projection
    of the expected payoffs on the set $\Delta_n$.}. Then, the
  dynamics ~(\ref{dynamics}) can be rewritten:
\begin{equation}
\label{eq:deriv}
\left\{
\begin{array}{l @{\;=\;} l}
  \displaystyle \frac{dx}{dt} & x(1-x)(k_{0,1}-k_{0,0}+Ky) \\[0.8em]
  \displaystyle \frac{dy}{dt} & y(1-y)(k_{1,0}-k_{0,0}+Kx),
\end{array}
\right.
\end{equation}
where $K=k_{1,1}+k_{0,0}-k_{0,1}-k_{1,0}$. Note that in a two-player
two-action game, there are at most two local maxima. Suppose that in
the considered game, there are two local maxima. They are necessarily
attained either at points $(0,0)$ and $(1,1)$ or at points $(0,1)$ and
$(1,0)$. Without loss of generality, we can assume the former case.
Hence, $k_{0,0}$ and $k_{1,1}$ are local maxima, and $k_{1,1} >
k_{0,0}+\gamma $, where $\gamma >0$.

We now define set $E$ and function $V$ as follows:
\begin{equation*}
\begin{array}{lll}
V(x,y)&=&|1-x|+|1-y|, \\
E&=&\{(x,y):x+y>1,\, 0<x,\, y<1 \}.
\end{array}
\end{equation*}
($V$ is actually the distance of $(x,y)$ to the point $(1,1)$ for the 1-norm.)
We next show that $V$ is a Lyapunov function for the dynamics on the
open set $E$. To prove this, it is sufficient to show that
$$L(x,y) \bydef \frac{\partial
  V}{\partial x}(x,y)\frac{dx}{dt}+\frac{\partial V}{\partial
  y}(x,y)\frac{dy}{dt}<0.$$

First, note that $\forall (x,y) \in E$, $V(x,y)=2-x-y$. Hence, from
Eq.~\ref{eq:deriv},
$$
L(x,y) = - x(1-x)(k_{0,1}-k_{0,0}+Ky)- y(1-y)(k_{1,0}-k_{0,0}+Kx).
$$
Let also be $D$ the open segment $\{(x,y):x+y=1, 0<x,y<1
\}$. Trivially,
\begin{equation}
\label{eq:bord}
\forall (x,y) \in D,\, L(x,y=1-x)=-x(1-x)(k_{1,1}-k_{0,0})<0.
\end{equation}
Let us finally consider the segment 
$$S(x_0)=\{(x,y): x+y \geq 1,x=x_0, 0\leq y \leq 1\}.$$ 
Figure~\ref{fig:cas22} summarizes the different notations introduced.
\begin{figure}[htb]
\centering
\includegraphics[width=0.4\linewidth]{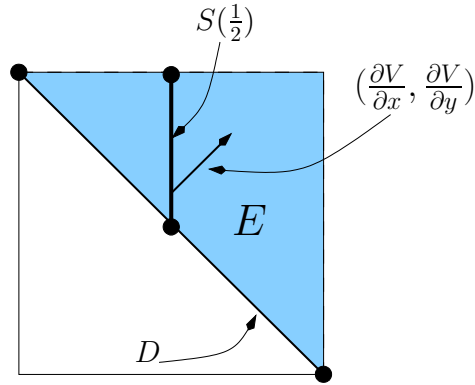}
\caption{Proof of Prop.~\ref{prop:2*2}: Summary of notations}
\label{fig:cas22}
\end{figure}

Since $\displaystyle E \subset \bigcup_{0<x<1}S(x)$, it is sufficient
to show the negativeness of $L$ on $S(x)$ for all $x$. Let us denote
by $L_{x}(y)$ the restriction of $L$ on $S(x)$. From
Eq.~\ref{eq:bord}, we have $L_{x}(1-x)<0$.  Furthermore, $L_x(y)$ is a
quadratic function and its discriminant is
$4(k_{1,0}-k_{0,0})(k_{1,1}-k_{0,1})$, hence is negative. So, for all
$x$, $L_x(y)$ is negative (strictly). Finally, $L$ is negative (strictly) in $E$
and hence non-positive in a neighborhood of $E$.

Therefore, $V$ is a Lyapunov function for the dynamics on a
neighborhood of the open set $E$. More precisely, $V$ is strictly
decreasing on the trajectories of the dynamics starting in the set
$E$, hence they converge to the unique minimum of $V$ which is the
point $(1,1)$. This applies to the initial point $(0.5,0.5)$.
\end{proof}

Figure~\ref{dynamic_2_2} illustrates this result: consider a two
player (numbered $1$ and $2$), two strategy (denoted by $A$ and $B$)
game. Let $x$ (resp. $y$) be the probability for player $1$
(resp. $2$) to take action $A$. While two (local) maxima exist -
namely $(1,1)$ and $(0,0)$ - the surface covered by the basin of
attraction of the global optimum (which is $(1,1)$ in this example) is
greater than those of the other one. A by-product is that the dynamics
starting in point $(0.5,0.5)$ converges to the global optimum.

\begin{figure}[htb]
\begin{center}
  \includegraphics[width=0.5\linewidth, angle=270]{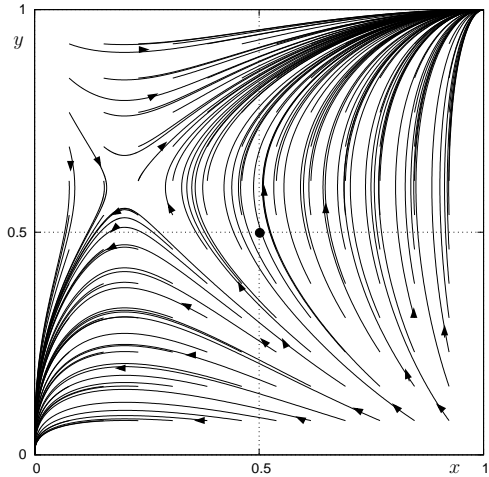}
  \caption{An example with $2$ players with  $2$ choices each. There are $2$
    maxima. The point $(\frac{1}{2},\frac{1}{2})$ is inside the
    attracting basin of the global maximum. }\label{dynamic_2_2}
\end{center}
\end{figure}

Unfortunately, this appealing result
cannot be generalized to more players or more actions, as exemplified
in the following subsections.

\subsubsection{Extension to more than two players}
\label{sec:3play}

\begin{example}
  Let us consider a three player game : $({\cal N},{\cal I},{\cal U})$
  with ${\cal N} = \{1,2,3\}$, ${\cal I} = \{A,B\}$, and ${\cal
    U}=(u_n(i,j,k))_{n \in \{1,2,3\}, i,j,k \in \{A,B\}}$, where $i$
  (resp. $j$), denotes the choice of player $1$ (resp. $2$). The
  matrix representation of $(u_1,u_2,u_3)$ are given below:
$$
(u_1,u_2,u_3)(i,j,1) = \left(\begin{array}{cc}
    (9,6,4)&(5,5,5)\\
    (5,8,1)&(2,4,4)
 \end{array}\right),
$$ 
$$(u_1,u_2,u_3)(i,j,2) = 	 
\left(\begin{array}{cc}
    (7,2,8)&(5,4,7)\\	
    (6,3,3)&(10,2,8)
  \end{array}\right).
$$
Note that this game has no pure strategies Nash equilibrium and a
single mixed strategies Nash equilibrium, which is
$(x,y,z)=(1/3,5/6,0)$. The corresponding value of the potential
function is $87/6=14.5$.

The repercussion utility matrices are:
$$
(r_1,r_2,r_3)(i,j,1) = \left(\begin{array}{cc}
    (10,9,10)&(6,5,5)\\
    (5,5,6)&(1,1,4)
  \end{array}\right),
$$ 	
$$(r_1,r_2,r_3)(i,j,2) =
\left(\begin{array}{cc}
    (6,4,8)&(5,3,7)\\
    (1,3,4)&(9,11,14) 	 
  \end{array}\right).
$$
This game has two pure Nash equilibria, that are $(x,y,z)=(1,1,1)$ and
$(x,y,z)=(0,0,0)$, corresponding to values of the potential function
that are respectively $29$ and $34$.

Figure~\ref{3_players} shows that the trajectory starting at point
$(\frac{1}{2},\frac{1}{2},\frac{1}{2})$ converges to the local maximum
$(x,y,z)=(1,1,1)$ instead of the global maximum
$(x,y,z)=(0,0,0)$. Note that the performance of the local maximum is
way ahead that of the Nash Equilibrium in the original game.
\end{example}

\begin{figure}[htb]
\centering
  \includegraphics[width=0.5\linewidth, angle=270]{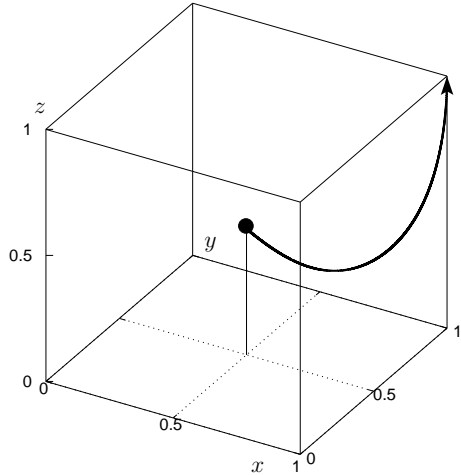}
  \caption{Example with $3$ players, with $2$ choices each. The figure
    represents the dynamic trajectory starting from the point
    $(x,y,z)(0)=(\frac{1}{2},\frac{1}{2},\frac{1}{2})$, with $x$
    (resp. $y$, $z$) the probability for player $1$ (resp. $2$, $3$)
    to adopt action $A$. 
    The dynamics converges to the point $(1,1,1)$ whereas the global
    maximum is $(0,0,0)$.\label{3_players}}
\end{figure}

\subsubsection{Extension to more than two choices}
\label{sec:3choices}

\begin{example}
  Let us now consider the two player game $({\cal N},{\cal I},{\cal
    U})$ with ${\cal N} = \{1,2\}$, ${\cal I} = \{A,B,C\}$, ${\cal U}
  = (u_n(i,j))_{n\in\{1,2\}, i \in \{A,B\}, j \in \{A,B,C\}}$. (Note
  that in this example, only the second player has three possible
  choices).

The payoff matrix is:
$$
(u_1,u_2)(i,j)=
\left(
\begin{array}{ccc}
(6,3)&(-3,11)&(-3,10)\\
(0,2)&(-1, 1)&(0,10)\\
\end{array}
\right).
$$

The companion game is:
$$
(r_1,r_2)(i,j)=
\left(
\begin{array}{ccc}
(7,12)&( -3,11)&(-3,10)\\
(0, 2)&(-11, 0)&( 0,10)\\
\end{array}
\right).
$$

The original game has one single pure Nash equilibria which is $(B,C)$
resulting in the value $10$ for the potential function and no mixed
strategies equilibria exists.

The companion game has two pure Nash equilibria that are $(A,A)$ and
$(B,C)$, corresponding to values of the potential function of $9$ and
$10$ respectively.

Denote $x$ the probability for player $1$ to choose action $A$ and
$y_1$ (resp. $y_2$) the probability for player $2$ to choose action
$A$ (resp. $B$). Then, the global maximum of the potential function is
$10$, and is attained when $x=y_1=y_2=0$. Figure ~\ref{3_choices}
shows that the trajectory starting at point
$(\frac{1}{2},\frac{1}{3},\frac{1}{3})$ converges to the local maximum
$(1,1,0)$, corresponding to Nash equilibrium $(A,A)$ of the companion
game, which is inefficient. Interestingly in this example, the unique
Nash equilibrium of the original game corresponds to the global
maximum of the game.
\end{example}

\begin{figure}[h]
\centering
  \includegraphics[angle=270,width=0.5\linewidth]{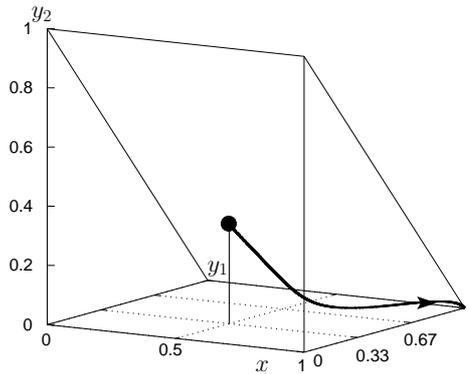}%
    \caption{Example with $2$ players. The first one has $2$ choices and
    the second one has $3$ choices. Here we display the
    $3$-dimensional plot of $y_1$ vs $x$ and $y_2$ vs $x$. The
    dynamics starting in $(1/2,1/3,1/3)$ converges to the point
    $(1,1,0)$ whereas the global maximum is
    $(0,0,0)$.\label{3_choices}}
\end{figure}

\section{Numerical study}
\label{sec:validation}

This section is devoted to implementation issues and shows the
numerical tests that were performed so as to study several possible
practical heuristics based on the algorithm.

First, notice that in the algorithm, users only need to know the
repercussion utility on their current cell to compute their new
strategy vector. Also, each base station only needs to know her own
load to compute the repercussion utilities, hence allowing for a fully
distributed algorithm.

During the execution of the algorithm, at each time slot (typically,
frames are sent every 40 ms for video transmission), each user
executes the algorithm independently, updates her probability vector,
makes a choice according to her strategy and sends a packet to the
corresponding base station.  Meanwhile, each base station measures the
throughputs of all mobiles connected to it and computes the
corresponding repercussion utilities. Then, it sends to every user
their individual repercussion utility.

Once a user reaches a pure strategy, she informs all the cells she has
access to. Each cell waits for all users connected to her to converge
before asking them to monitor their repercussion utility. From then
on, any variation of the load is due to an arrival or departure in the
cell. Hence, upon detection of a change of her repercussion utility,
each user reruns the algorithm, starting with a new probability
vector.

In the previous theoretical sections, convergence of the algorithm
have been shown when the step size $\epsilon$ tends to $0$. Here, we
present several simple heuristics with different step size computation
methods. While the convergence step should be small enough to ensure
convergence, larger values are preferable to decrease the algorithm
runtime.
Hence, appropriate trade-offs need to be examined.

In the first subsection, we present the different heuristics
(Subsection~\ref{sec:heuristiques}). We then present the scenario to
be simulated (in terms of number of users and network topology)
(Subsection~\ref{sec:topologie}). To perform the tests, realistic
throughputs need to be chosen for different combinations of loads,
i.e. values of $u(\ell^i)$ for each possible load $\ell^i$. We provide
such values in Subsection~\ref{sec:valeursNum}.  We then compare the
results obtained by the different heuristics, in terms of efficiency
(the quality of the solution) and convergence speed
(Subsection~\ref{sec:comp}). We briefly comment in
Subsection~\ref{sec:fairness} on the impact of fairness on the
resulting association. Finally, in Subsection~\ref{sec:further}, given
the best heuristic, we provide experimental results about: the
scalability of the algorithm on the system size, the adaptation to
arrival or departure of a mobile, the comparison with other policies,
and the adaptation to different kind of traffic.

\subsection{The Different Heuristics for the Steps}
\label{sec:heuristiques}

Each heuristic actually consists of two parts:
\paragraph*{\it A stopping test} As time increases, the probabilities
of choosing each action tends either to $0$ or $1$. So as to speed up
convergence, we consider thresholds $\delta_{m}$ and $\delta_{M}$
such that:
$$
\forall n \in \mathcal{N},  \forall i \in \mathcal{I}_n, 
\left\{
\begin{array}{@{\,}ll}
  q_{n,i}(t+1) \Gets 0 & \text{if } q_{n,i}(t)< \delta_{m}\\
  q_{n,i}(t+1) \Gets 1 & \text{if } q_{n,i}(t)> 1-\delta_{M}.\\
\end{array}
\right.
$$ 
When one of this operation is done, the strategies are normalized to
remain in the strategy set $\Delta$, and to preserve the condition
$\displaystyle \sum_{i \in \mathcal{I}_n}q_{n,i}=1$. In the tests, we
fix $\delta_{m} = 0.05$ and $\delta_{M} = 0.3$.
\paragraph*{\it A step size computation}: different schemes to compute
$\epsilon_n(t)$ are considered.

\subsubsection{Constant Step Size (CSS)}
In this heuristic, the step size is predefined and constant throughout
time: $\forall n \in \mathcal{N}, \forall t, \epsilon_n(t) =
\epsilon$.  For low values ($\text{CSS}_L$), typically $\epsilon=
0.01$, the algorithm converges in almost all cases to the optimal
solution, but at the cost of a high number of iterations. For high
values ($\text{CSS}_H$), typically $\epsilon = 1$, the convergence and
the optimality are not guaranteed anymore. Intermediate values
($\text{CSS}_M$), typically $\epsilon = 0.1$, are possible
compromises.

\subsubsection{Constant Update Size (CUS)}
At each time epoch, each user computes the maximum step size so that
the change of probabilities for all choices, is bounded by a
predefined value $ \Gamma$ (fixed to 0.1 in the experiments):
$$
\forall n \in \mathcal{N}, \forall i \in \mathcal{I}_n, \quad
\text{abs}\left( q_{n,i} (t+1) - q_{n,i} (t) \right) \leq \Gamma.
$$ 
By bounding the update of every user, this scheme yields smooth
changes in the strategy vectors and hence can be expected to follow
the behavior of the differential equations.

\subsubsection{Decreasing Step Size (DSS)}
The underlying idea of this scheme is to use a few iterations with
large steps before using some smaller step sizes.  Indeed, a big step
size lets actions associated to large repercussion utilities to
quickly get high probabilities of occurrence.  Since the algorithm
converges to a Nash Equilibrium regardless of the initial conditions,
using a few large steps amounts in changing the initial conditions so
as to get close to extrema points, and hence to possible pure
strategies Nash Equilibria.  Then, the following iterations with
smaller step sizes correspond to a good approximation of the $CSS_L$
algorithm. These steps confirm (or infer) the fact that the extremal
point closer to the one obtained after the first iterations is (or
not) a Nash Equilibrium.

We consider two variants of the decreasing step size mechanism. The
first one is a cyclic decreasing step size (DSSSA) (in the
experiments, $\epsilon = 3/ (t \mbox { mod } 10)$). During each cycle a
Nash equilibrium candidate is tested.  This is inspired from simulated
annealing approaches.

The second variant (DSSCSS) is a decreasing step size phase followed
by a constant large step size (in the experiments, $\epsilon = 4/ t $
if $t < 120$ and $\epsilon = 4 $ otherwise).  The underlying idea is
that the first phase would stabilize a certain number of users.  Then,
a large step size should improve the convergence speed of the others
to their respective preferable choices.

\subsection{System Scenario}
\label{sec:topologie}

We consider a simple scenario of an operator providing subscribers
with a service available either through a large WiMAX cell or a series
of WiFi hot spots.

For each simulation, a topology is chosen randomly, according to $3$
parameters (the number of users, the number of WiFi hot spots and the
number $I_n$ of possible choices for each user). More precisely, for each
user:
\begin{itemize}
\item The first choice is the WiMAX cell and one of the $8$ possible
  zones (as defined in Section~\ref{sec:valeursNum}), picked at random
  (uniformly).
\item All other $I_n-1$ choices are one of the Wifi cells, picked up
  according to a uniform law.  As explained in
  Section~\ref{sec:valeursNum}, we consider that all mobiles in a
  common Wifi cell receive the same throughput.
\end{itemize}

The strategy vector is initialized with equal probabilities: $\forall
n \in \mathcal{N}, \forall i \in \mathcal{I}_n, q_{n,i} (0) = 1/I_n$.

\subsection{Throughput of TCP sessions in WLAN and WiMAX}
\label{sec:valeursNum}

Computing the throughput experienced by a packet in a wireless
environment is extremely hard due to the complexity of the physical
system (as opposed to wired system, where the physical medium is
separated from the outside world, and hence has reliable properties,
the wireless link quality changes at every instant, due to the
environment: air quality, buildings and physical obstacles,
etc). Therefore, actual closed formula available in the literature
were obtained using strong assumptions on the outside world and do not
refer to throughput of a single packet but of means of flows. Indeed,
as the number of packets in any connection is large, the flow is
usually approximated as a fluid.

In addition, the useful throughput of a connection, also called
\emph{goodput} depends on the network protocol. Roughly speaking, two
main elements have strong impact on the achieved goodput: first is the
physical system, which depends on the technology in terms both of
maximum capacity and multiplexing technology, second is the transport
protocol.  In this simulation study, we consider the case of TCP flows
for which good throughput approximations are available in the
literature. Yet, the use of UDP flows, or a mixture of TCP and UDP
flows do not impact the performance of the algorithm. (Note that
allowing users to use either TCP or UDP protocol for their
transmission amounts, in the algorithm, to considering an additional
zone in the network cell.)

\paragraph*{Equations of throughput in WiFi cells}
Based on \cite{miorandi06}, we consider that the throughput of
connection $i$ is
$$ u_n(\ell^{i}(s)) = \frac{L_{TCP}}{l^i
  × \left( T_{DATA}+T_{ACK}+2T_{TBO}(l^i)+ 2T_{W}(l^i)\right)
}$$ where $l^i=\sum_{n \in \mathcal{N}}\ell^i_n$ is the number of
mobiles connected to network $i$, $L_{TCP}=8000$ bits is the size of a
TCP packet, $T_{ACK}$ is the raw transmission times of TCP ACK
(approximately $1.091$ ms), $T_{DATA}$ the raw transmission times of a
TCP data packet (about $1.785$ ms). Then, $T_W$ and $T_{TBO}$ are the
mean total time lost due to collisions and back-offs
respectively. These depend on the collision probability of each
packet, and hence on the load of the network. This collision
probability can be numerically obtained via a fixed point equation
given in~\cite{miorandi06}.  Figure \ref{fig:cellCapa} displays the
throughput of a WiFi cell, as a function of the load.

\begin{figure}
\begin{center}
\includegraphics[width=0.8\linewidth]{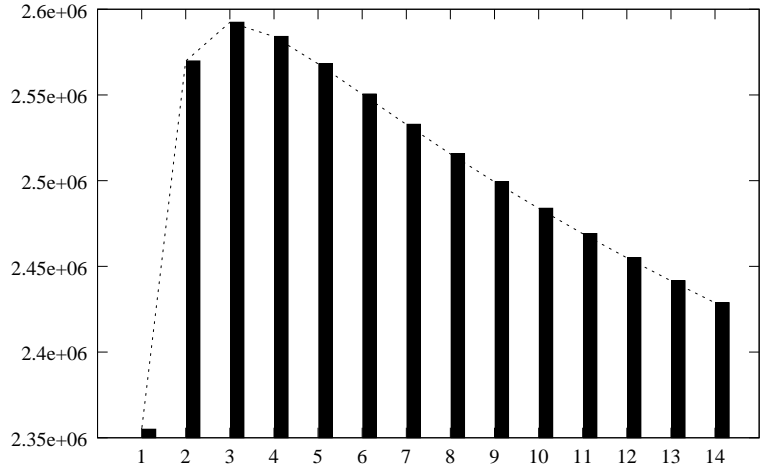}
\caption{Capacity of a WiFi cell as a function of its load (in
  bit/s). The maximum is reached with 3 users.}
\label{fig:cellCapa}
\end{center}
\end{figure}

\paragraph*{WiMAX}
As opposed to WiFi, the WiMAX technology uses OFDMA
multiplexing. Hence, each user receives a certain number of carriers
which are converted into a certain amount of throughput depending on
the chosen modulation and coding scheme, which greatly depends on the
link quality at the receiver side. We consider a fair sharing in terms
of carriers~\cite{tijani}, {\it i.e.}  if $p$ users are present in the
WiMAX cell, each of them will receive $\text{{\it NbSCarriers}}/p$
sub-carriers, similarly to processor sharing. Hence, the goodput
experienced by a user in zone $z$ (corresponding to a coding scheme)
is roughly the fraction $1/p$ of the throughput she would obtain if
she were alone in the cell.

For a single user within the WiMAX cell, we follow experimental values
obtained
in~\cite{perfWiMAX} for IEEE WiMAX 802.16d for its eight zones: \\
[0.6em]
\noindent
\scalebox{.83}{
\begin{tabular}{@{}c|@{}cccc@{}}
Modulation & QAM64 3/4 &  QAM64 2/3 &  QAM16 3/4  &  QAM16 1/2 \\
TCP goodput &  9.58 & 8.88  &  6.80 & 4.50 \\ 
\hline
Modulation & QPSK 3/4  &  QPSK 1/2 & BPSK 3/4 & BPSK 1/2 \\
TCP goodput &3.37  &  2.21 &1.65   & 1.08  \\
\end{tabular}
}

\subsection{Comparisons between Heuristics}
\label{sec:comp}

\begin{figure}
\begin{center}
\includegraphics[width=0.8\linewidth]{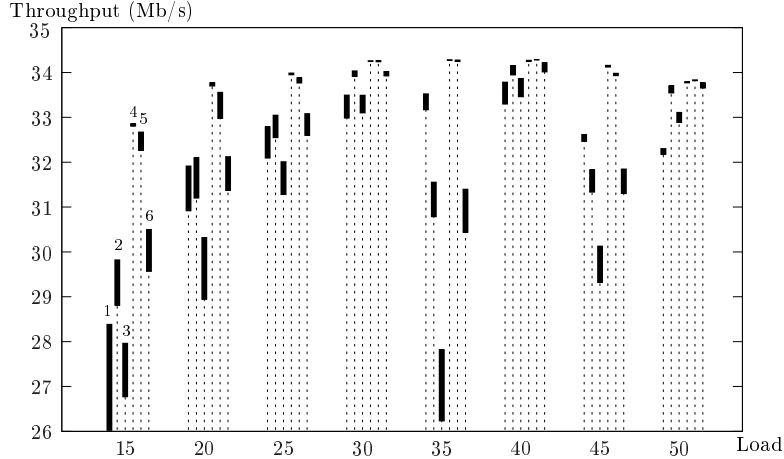}
\caption{Average performance of the  heuristics ($CUS,DSSSA, DSSCSS,
  CSS_L,   CSS_M $ and $CSS_H$ resp.)  with 
  different loads (with 5\% confidence intervals).}
\label{fig:Perf}
\end{center}
\end{figure}

\begin{figure}
\begin{center}
\includegraphics[width=0.8\linewidth]{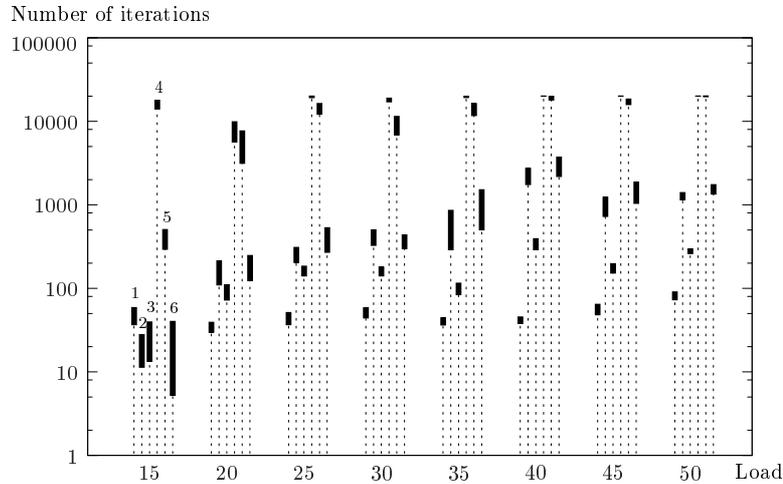}
\caption{Average number of iterations before convergence  of   heuristics ($CUS,DSSSA, DSSCSS,
  CSS_L,   CSS_M$ and $CSS_H$ resp. )  for
  different loads (with 5\% confidence intervals).}
\label{fig:cellIter}
\end{center}
\end{figure}

Figure \ref{fig:Perf} displays the performance (in terms of global
throughput) obtained by the six heuristics ($CUS,DSSSA, DSSCSS, CSS_L,
CSS_M$ and $CSS_H$ resp.) as a function of the total number of users
$N$.  For a given load, all heuristics have been tested on the
same topology to allow a fair comparison.

The small constant step size ($CSS_L$ with $\epsilon = 0.01$), provides the
best performance. It is are even tested optimal for the small values
of $N$, up to 20.

Most heuristics stay within 10 \% of the optimal (except for DSSCSS
whose performance can be poor).  Also note that the total capacity of
the system is less than 36 (10 * 2.6 (WiFi) + 9.58 (WiMAX))
Mbit/s. Thus the best heuristic is always within 5 \% of the optimal.
Finally, it should be noted that the medium constant step size
($CSS_M$) with $\epsilon = 0.1$ is always very close to the best ($CSS_L$)
and that the constant update size ($CUS$) performs better and better
when the number of users grows.

As for the number of iterations, it varies widely between the
different heuristics, even on a logarithmic scale (see Figure
\ref{fig:cellIter}).  The $CUS$ heuristic is a clear winner here (with
an average number of iterations never above 80).  Meanwhile, $CSS_L$
does not always converge within the limit of 20,000 iterations set in
the program.

Under high loads, $CUS$ provides the best compromise with very fast
convergence and reasonable performance.  Under light load, the
constant step size of medium size ($CSS_M$) is also an interesting
choice, for its performance is almost optimal and its number of
iterations remains below 100.

\subsection{Impact on Fairness}\label{sec:fairness}

Consider the following scenario: a set of $20$ users, each having $3$
available choices among 10 cells. The WiMAX cell is numbered 0 and its
8 zones are numbered from 0 to 7.  The set of choices of the users are
$I=$
\begin{equation*}
\begin{array}{lll}
  \{ \{0,1\}, \{ 8\}, \{ 1\} \} & \{ \{0,5\}, \{
  6\}, \{ 4\} \} & \{ \{0,1\}, \{ 6\}, \{ 9\} \}\\ 
  \{ \{0,2\}, \{2\}, \{ 6\} \} & \{ \{0,3\}, \{ 8\}, \{ 9\} \} &
  \{ \{0,6\}, \{4\}, \{ 9\} \}\\
  \{ \{0,7\}, \{ 3\}, \{ 6\} \} & \{ \{0,4\}, \{
  1\}, \{ 2\} \} & \{ \{0,6\}, \{ 6\}, \{ 9\} \}\\
  \{ \{0,5\}, \{3\}, \{ 4\} \} & \{ \{0,6\}, \{ 3\}, \{ 1\} \} & 
  \{ \{0,7\}, \{9\}, \{ 6\} \}\\
  \{ \{0,3\}, \{ 8\}, \{ 1\} \} & \{ \{0,6\}, \{4\}, \{ 7\} \} &
  \{ \{0,6\}, \{ 9\}, \{ 5\} \}\\ 
  \{ \{0,0\}, \{6\}, \{ 5\} \} & \{ \{0,5\}, \{ 4\}, \{
  1\} \} & 
  \{ \{0,6\}, \{
  6\}, \{ 4\} \}\\
  \{ \{0,3\}, \{ 3\}, \{ 4\} \} & \{ \{0,3\}, \{
  8\}, \{ 4\} \}.\\
\end{array}
\end{equation*}

The optimal association scheme, for $\alpha = 0$ (efficient scheme)
and $\alpha = 2$ (fair schemes) are respectively:
\begin{eqnarray*}
A_\text{eff} &= \{2,
1, 2, 1, 1, 1, 1, 2, 2, 2, 1, 1, 2, 2, 2, 0, 2, 1, 1, 1\},\\
A_\text{fair} &= \{0, 1, 0, 1, 0, 2, 1, 2, 1, 1,
2, 1, 1, 2, 2, 2, 1, 2, 0, 1\}
\end{eqnarray*}
resulting in throughputs of:
\begin{eqnarray*}
\scalebox{.8}{$ 
\begin{array}{l@{\,}l@{\;}l@{\;}l@{\;}l@{\;}l@{\;}l@{\;}l@{\;}l@{\;}l@{\;}l@{}}
 T_{\text{eff}} \hspace{-0.15em}= &0.824, &1.225, &0.824, &1.225, &1.225, &1.225, &0.824, &1.225,
  &0.824, &1.225,\\ &
  0.824, &0.824, &0.824, &2.245,& 2.246, &9.58, &0.824, &1.225, 
  &0.824, &1.225.
\end{array}
$}\\
\scalebox{.8}{$
\begin{array}{l@{\,}l@{\;}l@{\;}l@{\;}l@{\;}l@{\;}l@{\;}l@{\;}l@{\;}l@{\;}l@{}}
 T_{\text{fair}} \hspace{-0.15em}=&  2.22, &1.225, &2.22, &1.225, &1.125, &1.225, &1.225,
  &1.225, &1.225, &1.225, \\ &2.245, &1.225, &1.225, &2.246,
  &1.225, &1.225, &1.225, &1.225, &1.125, &1.225.
\end{array}$}
\end{eqnarray*}

The efficient scheme achieves a total throughput of $31.29$ Mb/s.  The
fair scheme suffers a degradation of slightly less than 10\%, with a
total throughput of $28.34$ Mb/s. Yet a closer look at the figures
indicates that the efficient scheme leads to high differences between
users (user $1$ only obtains a throughput of 0.8 Mb/s while user $16$
is granted 9.58 Mb/s). As for the fair association scheme, on the
other hand, all users benefit from throughputs higher $1.1$ Mb/s. As
in bandwidth allocation mechanisms in wired systems\cite{equite}, the
parameter $\alpha$ hence allows to finely tune the compromise between
maximum global throughput and fairness between users.

To understand these differences, let us compare the loads between the
associations:
\begin{equation*}\scalebox{.92}{$
  L_\text{eff}^{wifi} = \{3,2,3,2,1,2,1,2,3\}, \quad L_\text{fair}^{wifi}
  = \{1,2,2,2,2,2,1,2,2\}$}.
\end{equation*}
From Fig.~\ref{fig:cellCapa}, one can see that the maximum capacity
for the WiFi cells is obtained for a load of 3 users. Hence, the
efficient scheme tries to obtain as many cells with load 3 as
possible. Meanwhile, the WiMAX capacity is maximal when its users all
belong to zone $0$. Hence, such users are automatically associated to
this cell (in our case there is only one such user, which obtains a
throughput of $9.58$ Mb/s).

On the other hand, the fair scheme tries to find balanced association
schemes. Hence, the loads of the different WiFi cells are close to one
another\footnote{Note that they cannot be strictly equal due to the
  discrete nature of the problem.} (here ranging between $1$ and $2$)
and the WiMAX cell is associated to some users belonging to efficient
zones. Their number is chosen so as to obtain similar performance as
for users remaining in the WiFi cells.

Hence, while purely efficient schemes produce lightly loaded WiMAX
cells (with only the users in zone $0$), the fair scheme leads to more
balanced loads (here, $4$ users in the WiMAX cell and about $2$ users
in each WiFi cell).

\subsection{Further simulations}
\label{sec:further}
While very small constant step sizes provided limit points with near
optimal performance, all heuristics but CUS needed several thousand
steps before convergence for scenarios with more than 10 users and/or
cells.  The number of steps for CUS never topped 100 and its limit
points also proved very good (a few percent of the optimal). All
simulations reported in this subsection use the CUS heuristic.

\subsubsection{Scalability}
Here, we investigate the impact of the number of mobiles and the
number of cells each mobile can connect to on the speed of convergence
(Figures~\ref{fig:cus2},\ref{fig:cus3})). Unlike in the previous
section where the criterion of convergence speed was the number of
iterations of the algorithm, here, we measure the average number of
handovers for a mobile before convergence. It can be argued that this
new measure of convergence is more relevant since handovers are costly
for mobiles. Figures~\ref{fig:cus2},\ref{fig:cus3} show that the mean
number of handovers is smaller than $20$ when mobiles have $2$
choices, and smaller than $25$ when mobiles have $3$ choices, even for
large numbers of mobiles.

\begin{figure}
\begin{center}
\includegraphics[width=0.5\linewidth, angle=270]{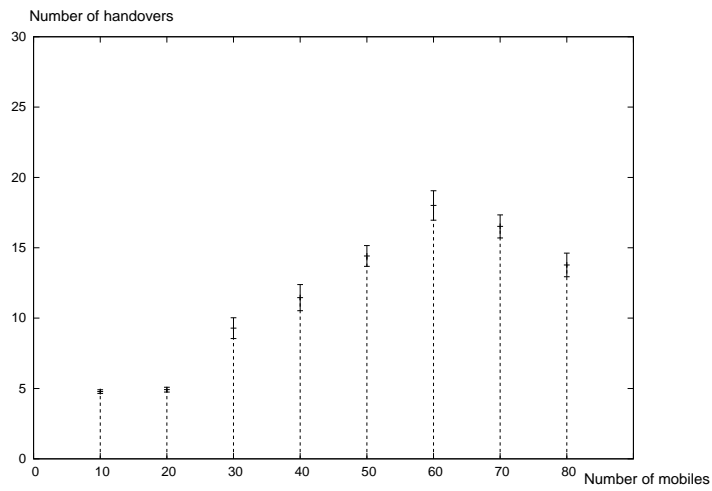}
\caption{Mean number of handovers for a mobile when she has 2
  choices, as a function of the total number of mobiles (full lines
  represent the average measure and the upper and lower 5\% confidence
  interval).}
\label{fig:cus2}
\end{center}
\end{figure}

\begin{figure}
\begin{center}
\includegraphics[width=0.5\linewidth, angle=270]{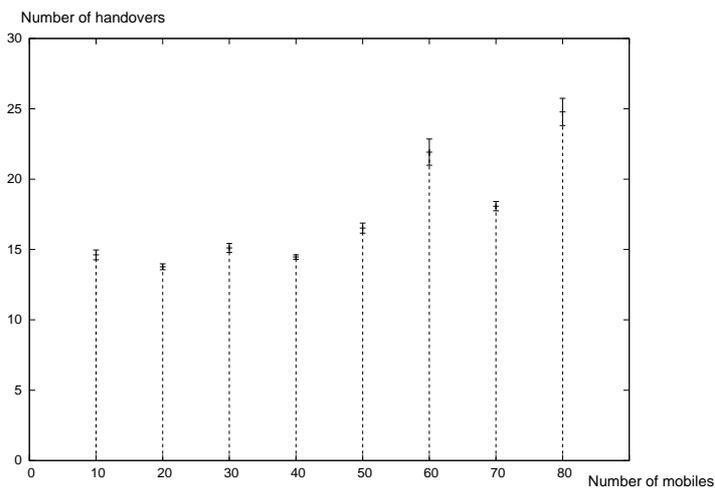}
\caption{Mean number of handovers for a mobile when she has 3
  choices, as a function of the total number of mobiles.}
\label{fig:cus3}
\end{center}
\end{figure}

\subsubsection{Adaptation to Arrivals and Departures}

The association algorithm has to be run at every arrival or departure
of a user in a cell. Here, we simulate the occurrence of such events.
Typical time scales compare nicely: while arrivals or departures of
users in WiMAX or WiFi cells occur every minute or so, the association
algorithm converges in less than a second in most
cases.

\begin{figure}
\begin{center}
\includegraphics[width=0.5\linewidth, angle=270]{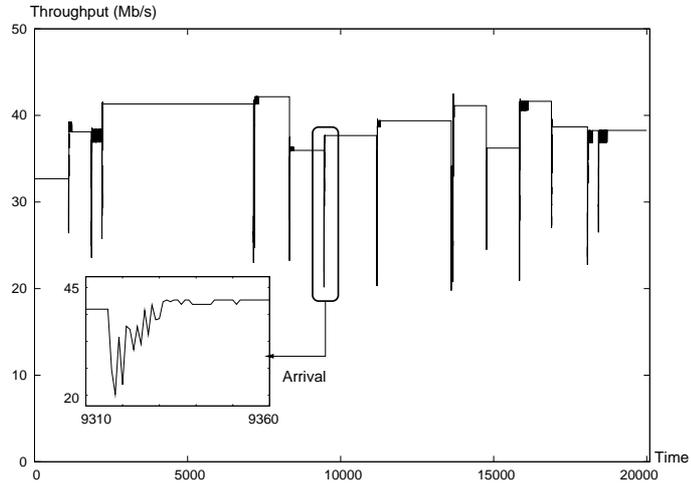}
\caption{Adaptation to arrivals and departures: the heuristic smoothly
  and quickly reconverges after state change.}
\label{dynamic_arrivee_alea}
\end{center}
\end{figure}

\begin{figure}
\begin{center}
  \includegraphics[width=0.5\linewidth,
  angle=270]{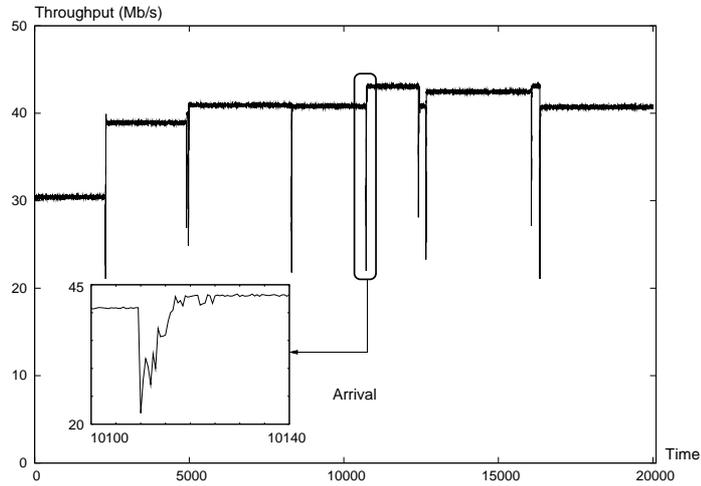}
  \caption{Stability with respect to measurement errors: behavior of the algorithm when the  throughput of all cells
    has a white Gaussian noise with $0.45$ variance.}  \label{dynamic_debit_alea}
\end{center}
\end{figure}

In Figures~\ref{dynamic_arrivee_alea}, \ref{dynamic_debit_alea}, the
arrivals follow a Poisson process. Each incoming mobile has a message
of exponential random size to download. One unit of time corresponds
to the duration of an iteration of the algorithm. In the second
figure, white noise may model perturbations on the cell capacity
(fading) as well as errors on the measures of the real throughput.

\subsubsection{Comparison with Naive or Sub-optimal Methods}

In this section, we compare our algorithm to naive allocation methods
for incoming mobiles.

\paragraph{Fixed Allocation to a WiFi Cell.}
The first naive method for a mobile consists in always connecting to a
WiFi cell if it is possible. It is inspired by the only currently
deployed technology implementing vertical handovers called GAN
(Generic Access Network), also known as Unlicensed Mobile Access
(UMA). Actually, GAN only enables to switch between WLAN and
GSM/UMTS. The capacity of WLAN networks is so much larger than the one
of GSM/UMTS networks that switching to WLAN network whenever possible
is almost always a good choice.  That is why the network selection of
GAN is very basic: the handset gives absolute preference to 802.11
networks over GSM. However, the GAN selection scheme is unlikely to be
efficient in more complex settings, especially when the load of WiFi
cells becomes very large and when WiFi cells compete against WiMAX or
LTE cells whose performance are closer to WiFi than UMTS.
Figure~\ref{fixed2} shows the relative improvement of our algorithm
compared to GAN-like approach.

\begin{figure}
\begin{center}
\includegraphics[width=0.49\linewidth, angle=270]{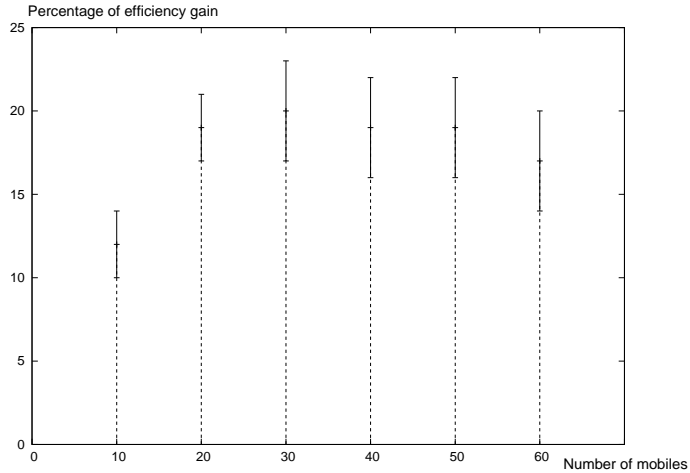}
\caption{Percentage of efficiency gain by using our algorithm in
  comparison to the fixed choice of WiFi cell for each incoming
  mobile. The number of mobiles is variable, but the number of WiFi
  cells is fixed to $15$.\label{fixed2}}
\end{center}
\end{figure}

\paragraph{Allocation to the Best Cell.}
\begin{figure}
\begin{center}
\includegraphics[width=0.5\linewidth, angle=270]{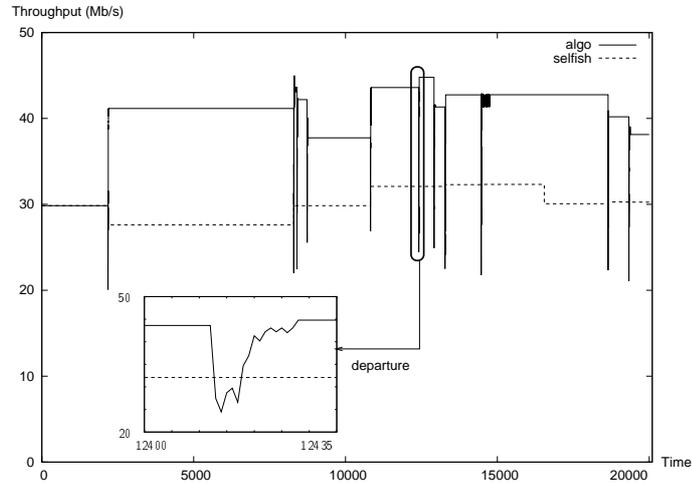}
\caption{Evolution of the global throughput using the association
  algorithm ("algo") and greedy probing ("selfish"). At time $0$, the
  configuration is the same, and the arrival processes of users are
  identical in the two cases.  Since the throughputs for mobiles are
  different in the $2$ schemes, the departure times are different. The
  mean performance in this period of time is $40.1$ for (algo), and
  $29.9$ for (selfish).}\label{compare_choix_best_debit}
\end{center}
\end{figure}
As for this second naive method, an incoming mobile acts selfishly:
she probes all available cells and always connects to the one that
offers the best throughput at connection time and does not change ever
after. Figure (\ref{compare_choix_best_debit}) shows the difference of
the global throughput when we use the both methods of association. We
see that our algorithm achieves a significant better throughput than
the selfish method. This is yet another illustration of the fact that
selfish behaviors lead to a bad use of the resources.

\paragraph{Comparison with the Throughput as Payoff.}
At last, we compare our algorithm when we use the repercussion utility
as payoffs for mobiles (Section~\ref{s-reper}) which ensures the
convergence to an locally optimal point, to the same algorithm when
the payoff is equal to the throughput: $r_n \bydef u_n$ for all
users. See
Figures~\ref{compare_reward_debit_rapport_constant},\ref{compare_reward_debit_rapport_variable_trente}
and \ref{compare_reward_debit_rapport_variable_vingt}. Here the gain
is much lower but both algorithms roughly have the same convergence
time.

\begin{figure}
\begin{center}
\includegraphics[width=0.5\linewidth,
angle=270]{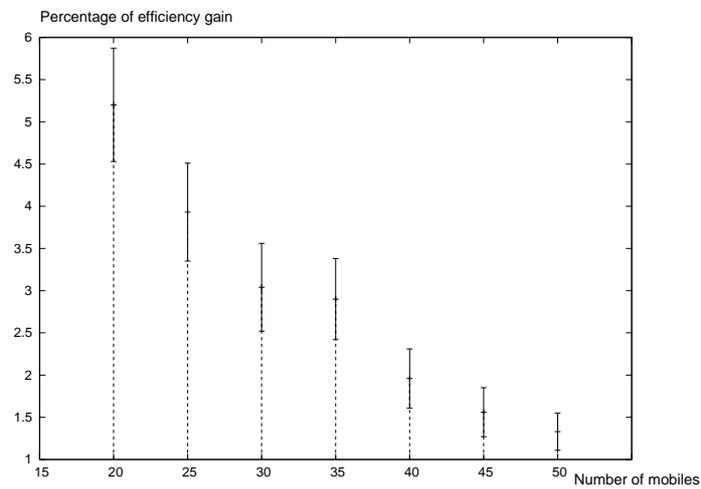}
\caption{Percentage of efficiency gain when using repercussion
  utilities instead of throughputs, when the number of mobiles
  varies. The ratio of the number of WiFi cells divided by the number
  of mobiles is constant and equal to $1$ over
  $5$.}\label{compare_reward_debit_rapport_constant}
\end{center}
\end{figure}

\begin{figure}
\begin{center}
\includegraphics[width=0.5\linewidth,
angle=270]{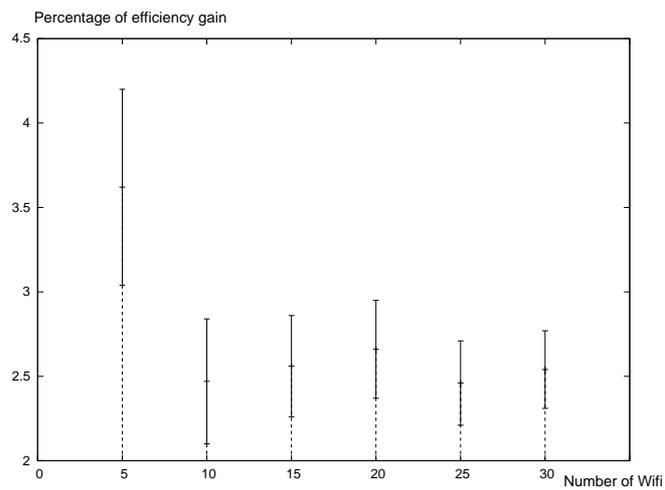}
\caption{Similar to
  Figure~\ref{compare_reward_debit_rapport_constant}, but the number
  of WiFi cells varies and the number of mobiles is constant and equal
  to $30$.}\label{compare_reward_debit_rapport_variable_trente}
\end{center}
\end{figure}

\begin{figure}
\begin{center}
  \includegraphics[width=0.5\linewidth,
  angle=270]{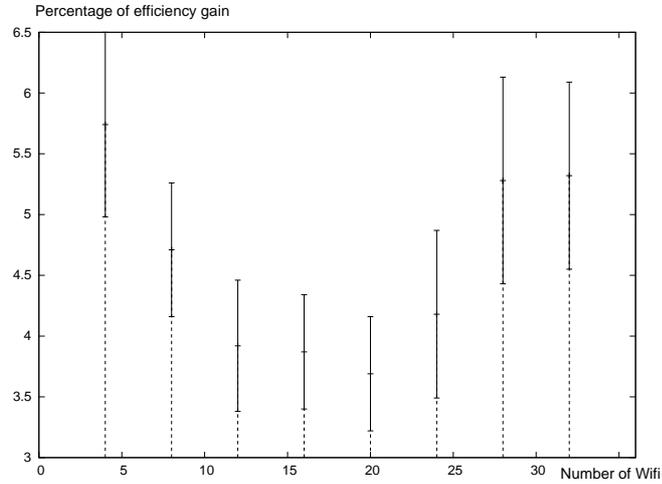}
  \caption{Like
    Figure~\ref{compare_reward_debit_rapport_variable_trente}, but
    with a constant number of mobiles equal to
    $20$.}\label{compare_reward_debit_rapport_variable_vingt}
\end{center}
\end{figure}

\subsubsection{Real-Time  Traffic vs Elastic Traffic}

The question here is to know whether real time traffic can be taken
into account in the algorithm. In fact, for elastic traffic, utility
for users is intimately related to the throughput they receive. For
real time traffic like voice or video transmission, users require a
certain level of throughput.  Hence the idea is to build a different
utility function for these users.

The first idea is to have a null utility if the throughput is under a
certain threshold, and a utility equal to $1$ otherwise. The algorithm
works well with this utility but is long to converge because the
discontinuity causes a bang-bang behavior of the users. This problem
can be avoided by transforming the utility function: under the
threshold the utility is still $0$, and becomes
$1-exp(-u_n(\ell^{s_n}))$ above it. This provides good solutions in
terms of convergence speed as well as a good overall utility. In
Figure~\ref{real_time}, we show the behavior of the time of
convergence of this heuristic when the ratio of real-time traffic
vary. The impact of this ratio on the time of convergence is not
significant.

\begin{figure}
\begin{center}
  \includegraphics[width=0.5\linewidth,
  angle=270]{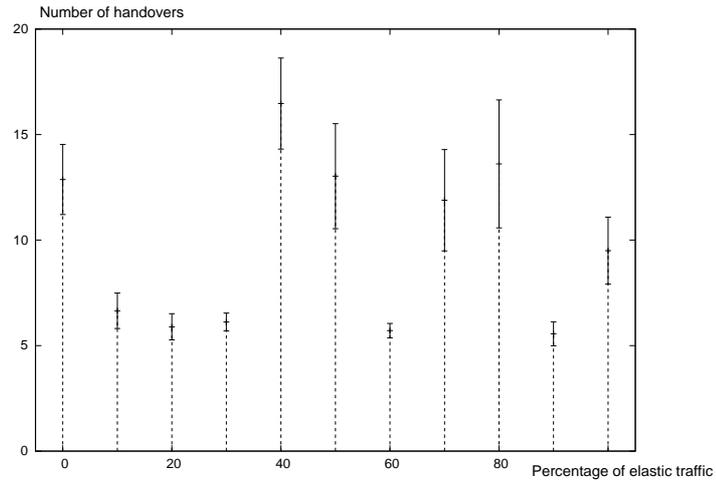}
\caption{Dependency of the time of convergence when  the ratio of
  elastic traffic varies. The number of mobiles is 30.}\label{real_time}
\end{center}
\end{figure}

\subsubsection{A Dynamic Scenario: between Mice and Elephants}

Here, we consider that the global traffic is shared by two kinds of
traffic called {\it mice} and {\it elephants}. The mouse traffic
corresponds to short lived connections ($< 1$ second) and the elephant
traffic to long connections (up to one minute).  There are relatively
few elephants and a large number of mice ($90 \%$), but globally, the
ratio of elephant traffic represents approximately $85 \%$ of the
global traffic. Whereas our algorithm is well adapted to elephant
traffic, since the time of convergence is negligible with respect to
the duration of the connection, it is not the case for mice
traffic. In Figures~\ref{mice} and \ref{mice_algo}, we compare two
scenarios, when both mice and elephants use the algorithm and when
only elephants do so (while mice always connect to one WiFi cell). The
second method reduces the number of handovers and preserves the
overall throughput (even giving a small gain) as seen in the
Figures~\ref{mice} and \ref{mice_algo}.

At last, Figure~\ref{mice_ratio} shows the performance gain when we
apply the algorithm for mice and elephants in comparison with applying
it only to the elephants. It points out the fact that both methods
have a similar efficiency, but the second ensures a low rate of
handovers. It is interesting to notice that this is independent of the
ratio of mice traffic. That means that the loss of throughput due to
the algorithm (which is important when the percentage of mice is
high), is balanced by the loss of optimality of the second method.

\begin{figure}
  \begin{center}
    \includegraphics[width=0.5\linewidth,
    angle=270]{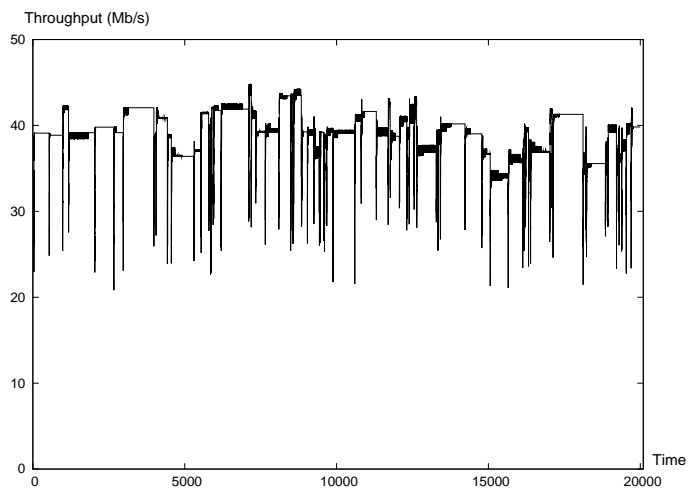}
    \caption{Traffic made of 30 initial users with $90\%$
      mice. Average packet size for elephants is 20 times the average
      packet size for mice.  The figure shows the total throughput
      when all users apply the algorithm. The average total throughput
      is $39.05 Mb/s$.}\label{mice}
\end{center}
\end{figure}

\begin{figure}
  \begin{center}
    \includegraphics[width=0.5\linewidth,
    angle=270]{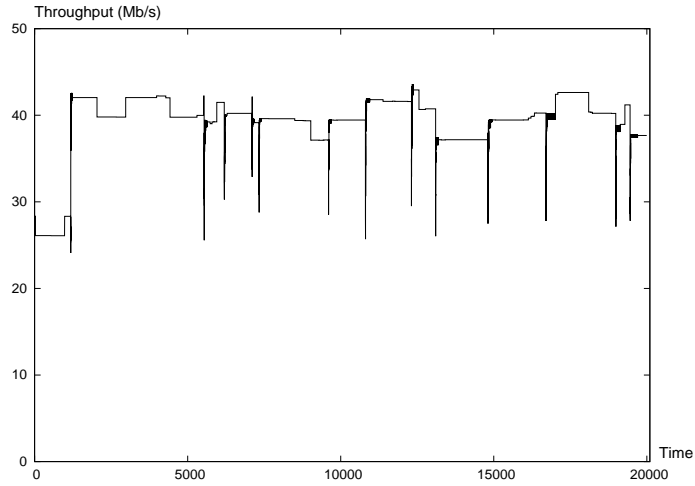}
    \caption{Same configuration and arrival process as in
      Figure~\ref{mice}. In this figure, mice are directly allocated
      to the WiFi cell without applying the algorithm. The mean
      throughput is $39.19 Mb/s$.}\label{mice_algo}
  \end{center}
\end{figure}

\begin{figure}
  \begin{center}
    \includegraphics[width=0.5\linewidth,
    angle=270]{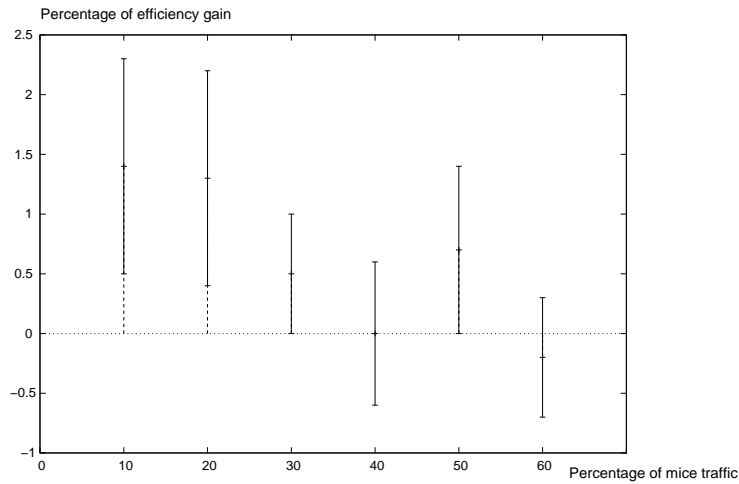}
    \caption{Percentage of gain by running the algorithm for mice and
      elephants instead of running it only for elephants as a function
      of the  percentage of mice traffic ( the global traffic average
      remains  constant).}\label{mice_ratio}
  \end{center}
\end{figure}

\section{Conclusion and Future Works}
In this paper, we have designed a distributed algorithm that selects
an efficient (in terms of fairness or global throughput) network
association in heterogeneous wireless networks. Simulations show that
this method is relevant, in comparison with naive method. This opens
the way to several interesting future works, such as the
implementation of such methods in modern mobile devices in
collaboration with Alcatel-Lucent.

\section{Acknowledgement}
We wish to thank Laurent Thomas, Sabine Randriamasy and Erick Bizouarn
from Alcatel-Lucent, who helped us with the design of several
scenarios using realistic parameters.

\bibliographystyle{IEEEtran}
\bibliography{handover}

\end{document}